\documentclass[a4paper,onecolumn,11pt,accepted=2023-02-02]{quantumarticle}
\pdfoutput=1

\usepackage{amsmath}
\usepackage{amsfonts}
\usepackage{amssymb}
\usepackage{amsthm}
\usepackage{authblk}
\usepackage{graphicx}
\usepackage{hyperref}
\usepackage[square,semicolon,numbers,sort]{natbib}

\usepackage[utf8]{inputenc}
\usepackage[english]{babel}
\usepackage[T1]{fontenc}
\usepackage{amsmath}

\usepackage{tikz}
\usepackage{lipsum}
\usepackage{fullpage}

\usepackage{enumerate}

\setlength{\oddsidemargin}{0in}
\setlength{\textwidth}{6.5in} 

\setlength{\headheight}{0in}
\setlength{\headsep}{0in}
\setlength{\topmargin}{0in}
\setlength{\textheight}{9in}
 
\usepackage{adjustbox}
\usepackage{soul}
\usepackage{comment}
\usepackage{bbold}
\usepackage{tikz}
\usetikzlibrary{decorations.pathreplacing,angles,quotes}
\usepackage{bbold}

\usepackage{diagbox}

\usetikzlibrary{matrix,arrows,decorations.pathmorphing}
\usepackage{tikz-cd}
\usetikzlibrary{arrows} 

\usetikzlibrary{decorations.markings}
 \usepackage{caption}
\usepackage{subcaption}
 
 \usepackage{pdfpages}

\usepackage{centernot}
\usepackage{mathtools}
\usepackage{stmaryrd}


\newtheorem{theorem}{Theorem}

\newtheorem{corollary}{Corollary}
\newtheorem{lemma}{Lemma}
\newtheorem{definition}{Definition}
\newtheorem*{question*}{Question}
\newtheorem{example}{Example}

\newtheorem{remk}{Remark}

\newcommand{\rev}[1]{#1}

\newcommand{\ra}{\ensuremath{\rightarrow}}
\newcommand{\zz}{\mathbb Z}

\newcommand{\R}{\mathbb R}
\newcommand{\CC}{\mathbb C}
\newcommand{\Z}{\mathbb Z}
\newcommand{\N}{\mathbb N}
\newcommand\mc[1]{\mathcal{#1}}
\newcommand\mH{\mathsf{H}}

\newcommand{\suchthat}{\;\ifnum\currentgrouptype=16 \middle\fi|\;}

\newcommand{\one}{\mathbbm{1}}
\newcommand{\set}[1]{\ensuremath{ \lbrace #1 \rbrace }}
\newcommand{\Span}[1]{\ensuremath{ \langle #1 \rangle }}


  \title{No quantum solutions to linear constraint systems in odd dimension from Pauli group and diagonal Cliffords}
\author{Markus Frembs}
\email{m.frembs@griffith.edu.au}
\affiliation{Centre for Quantum Dynamics, Griffith University,\\ Yugambeh Country, Gold Coast, QLD 4222, Australia}
\author{Cihan Okay}
\email{cihan.okay@bilkent.edu.tr}
\affiliation{Department of Mathematics, Bilkent University, Ankara, Turkey}
\orcid{0000-0001-8097-5227}
\author{Ho Yiu Chung}
\email{hoyiu.chung@bilkent.edu.tr}
\affiliation{Department of Mathematics, Bilkent University, Ankara, Turkey}
  
\begin{document}
  \maketitle  
  
\vspace{-.3cm}
\begin{abstract}
    Linear constraint systems (LCS) have proven to be a surprisingly prolific tool in the study of non-classical correlations and various related issues in quantum foundations. Many results are known for the Boolean case, yet the generalisation to systems of odd dimension is largely open. In particular, it is not known whether there exist LCS in odd dimension, which admit finite-dimensional quantum, but no classical solutions. In recent work,  [J. Phys. A,  \textbf{53},  385304 (2020)] have shown that unlike in the Boolean case, where the $n$-qubit Pauli group gives rise to quantum solutions of LCS such as the Mermin-Peres square, the $n$-qudit Pauli group never gives rise to quantum solutions of a LCS in odd dimension. Here, we generalise this result towards the Clifford hierarchy. More precisely, we consider tensor products of groups generated by (single-qudit) Pauli and diagonal Clifford operators. 
\end{abstract}
\vspace{-1.1cm}

\tableofcontents

\section{Introduction}

Linear constraint systems (LCS) have sparked a lot of interest in recent years, see \cite{Arkhipov2012,CleveMittal2014,Fritz2016,CleveLiuSlofstra2017,Slofstra2019,OkayRaussendorf2020} for instance. Every LCS can be cast into a two-player nonlocal game, which from a computational perspective are of interest as multi-prover interactive proof systems \cite{CleveMittal2014}. In this form, LCS have played a prominent role in the various stages and final solution of Tsirelson's problem \cite{Tsirelson2006,Slofstra2019,Slofstra2019b,JiEtAl2020}. The surprisingly rich interplay between quantum foundations and computer science, as well as the structure theory of von Neumann algebras \cite{Kirchberg1993,Fritz2012,JungeEtAl2011} shows that LCS have important applications in quantum theory.

Quantum solutions to LCS are known to exist only in even or infinite (Hilbert space) dimension;\footnote{Here, the dimension refers to the (minimal) Hilbert space on which operators act. The existence of linear constraint systems over $\zz_d$ admitting possibly infinite-dimensional quantum but no classical solutions for arbitrary $d \in \mathbb{N}$ has been reported in \cite{ZhangSlofstra2020}.}\label{fn: ZhangSlofstra} it is an open question whether finite-dimensional quantum solutions to LCS also exist over $\zz_d$ for $d$ odd. Notably, the qubit Pauli operators in the Mermin-Peres square (see Fig.~\ref{fig: MP square} below) define a quantum solution to a LCS over $\zz_2$ (see Eq. ~(\ref{eq: MP-square LCS})). This raises the question whether LCS with quantum but no classical solutions exist within the $n$-qudit Pauli group $\mc{P}^{\otimes n}_d$ for arbitrary finite dimension $d$; yet, \cite{QassimWallman2020} have shown that this is not the case.

\rev{Here, we generalise this result to a larger group, taking inspiration from (measurement-based) quantum computation, where the Pauli group appears at the first level of the so-called Clifford hierarchy (see Sec. ~\ref{sec: Diagonal Clifford hierarchy} below) \cite{Gottesman1998,GottesmanChuang1999,deBeaudrap2013}.} More precisely, we consider extensions of the (single-qudit) Pauli group by diagonal Clifford operators. Importantly, the resulting group contains a set of operators that has been shown to be universal for function computation in the restricted subtheory of deterministic, non-adaptive \rev{measurement-based quantum computation (MBQC) with linear side-processing (on a GHZ resource state)} (see Ref.~\cite{Raussendorf2013,FrembsRobertsBartlett2018,FrembsRobertsCampbellBartlett2023} for details). Within this subtheory, contextuality---roughly, the impossibility of assigning outcomes to all measurements independent of other simultaneously performed measurements---has been identified as a resource for computational advantage \cite{Raussendorf2013,FrembsRobertsBartlett2018}. Indeed,  Pauli and diagonal Clifford measurement operators generally lead to contextual MBQC \cite{FrembsRobertsCampbellBartlett2023}. Since the nonexistence of classical solutions to a LCS is also a form of contextuality, this further motivates to seek quantum solutions to LCS for $\zz_d$ within this group.\footnote{MBQC and LCS are related, in fact, every deterministic, non-adaptive MBQC acting on a GHZ state with operators from this enlarged group gives rise to an associated LCS.  Note that measurement operators in MBQC commute on their common resource eigenstate. Consequently, the MBQC lifts to a quantum solution of its associated LCS if and only if measurement operators commute (independent of the resource state). Since it would distract from the results of the present paper, we will defer a more detailed analysis of the correspondence between MBQC and LCS to future work.} Surprisingly, our main result (Thm.~\ref{thm: main result - Clifford hierarchy}) shows that no such solutions exist. \rev{What is more, the techniques developed in this work constitute a first step towards the analysis of (the existence of quantum solutions to) LCS in groups generated from other gate sets used in quantum computation (beyond Pauli and diagonal Clifford gates).}

The rest of this paper is organised as follows. In Sec.~\ref{sec: LCS and MP square} we briefly review the Mermin-Peres square and the definition of linear constraint systems. In Sec.~\ref{sec: MBQC-group}, we  define the group of interest, \rev{generated from Pauli and diagonal Clifford gates.} Sec.~\ref{sec: group properties} studies its properties, and builds up to the proof of our key technical contribution (Thm.~\ref{thm: quasi-local homomorphism in abelian subgroups of order p}). The details are of independent interest, but  may be skipped on first reading. Sec.~\ref{sec: from noncommutative to commutative LCS} contains the main result of our work (Thm.~\ref{thm: main result - Clifford hierarchy}), and Sec.~\ref{sec: discussion} concludes.

\section{Linear constraint systems and Mermin-Peres square}\label{sec: LCS and MP square}

\begin{figure}[!htb]
    \centering
        \begin{tikzpicture}[x=0.08mm,y=0.08mm]
            
            \node at (0,900)   (A) {\rev{$x_1 = X \otimes \one$}};
            \node at (300,900)   (B) {\rev{$x_2 = \one \otimes Y$}};
            \node at (600,900)   (C) {\rev{$x_3 = X \otimes Y$}};
            \node at (0,600)   (D) {\rev{$x_4 = \one \otimes X$}};
            \node at (300,600)   (E) {\rev{$x_5 = Y \otimes \one$}};
            \node at (600,600)   (F) {\rev{$x_6 = Y \otimes X$}};
            \node at (0,300)   (G) {\rev{$x_7 = X \otimes X$}};
            \node at (300,300)   (H) {\rev{$x_8 = Y \otimes Y$}};
            \node at (600,300)   (J) {\rev{$x_9 = Z \otimes Z$}};
            
            \draw[thick] (A) -- (B);
            \draw[thick] (B) -- (C);
            \draw[thick] (A) -- (D);
            \draw[thick] (B) -- (E);
            \draw[thick] (C) -- (F);
            \draw[thick] (D) -- (E);
            \draw[thick] (E) -- (F);
            \draw[thick] (D) -- (G);
            \draw[thick] (E) -- (H);
            \draw[thick] (F) -- (J);
            \draw[thick] (G) -- (H);
            \draw[thick] (H) -- (J);
        \end{tikzpicture}
    \caption{\rev{Two-qubit Pauli operators in the (contextuality proof of the) Mermin-Peres square \cite{Mermin1990,Mermin1993}.}}
    \label{fig: MP square}
\end{figure}

\rev{Let $\mc{P}^{\otimes n}_2 = \langle X,Y,Z\rangle^{\otimes n}$ denote the $n$-qubit Pauli group.\footnote{\rev{The Pauli matrices are $X = \begin{pmatrix} 0 & 1 \\ 1 & 0 \end{pmatrix}$, $Y = \begin{pmatrix} 0 & -i \\ i & 0 \end{pmatrix}$ and $Z = \begin{pmatrix} 1 & 0 \\ 0 & -1 \end{pmatrix}$.}}} The nine two-qubit Pauli operators $x_k \in \mc{P}^{\otimes 2}_2$, $k\in\{1,\cdots,9\}$ in Fig.~\ref{fig: MP square} satisfy the following constraints: \rev{(i) every operator is a square root of the identity, $x_k^2 = \one \otimes \one$ for all $k\in\{1,\cdots,9\}$, (ii) the operators in every row and column commute, $[x_k,x_l] = 0$ for all $k,l\in\{1,\cdots,9\}$}, and (iii) the operators in every row and column obey the multiplicative constraints
\begin{equation}\label{eq: MP-square constraints}
    \prod_{k=1}^9 x_k^{A_{ik}} = (-1)^{b_i}\; , \quad \rev{\forall i \in \{1,\cdots,6\}}\; ,
\end{equation}
where $A \in M_{6,9}(\zz_2)$ and $b \in \zz^6_2$ are defined as follows
\begin{equation}\label{eq: MP-square LCS}
    A = \begin{pmatrix}
         1 & 1 & 1 & 0 & 0 & 0 & 0 & 0 & 0  \\
         0 & 0 & 0 & 1 & 1 & 1 & 0 & 0 & 0  \\
         0 & 0 & 0 & 0 & 0 & 0 & 1 & 1 & 1  \\
         1 & 0 & 0 & 1 & 0 & 0 & 1 & 0 & 0  \\
         0 & 1 & 0 & 0 & 1 & 0 & 0 & 1 & 0  \\
         0 & 0 & 1 & 0 & 0 & 1 & 0 & 0 & 1  \\
    \end{pmatrix} \quad \quad \quad b = \begin{pmatrix}
         0 \\ 0 \\ 0 \\ 0 \\ 0 \\ 1
    \end{pmatrix} \; .
\end{equation}
Assume that there exists a (classical) solution $x_k = (-1)^{\tilde{x}_k}$ with $\tilde{x}_k \in \zz^9_2$. Then the constraints can be equivalently expressed in terms of the additive equation $A\tilde{x} = b \mod 2$,
\begin{equation}\label{eq: LCS relation}
    \prod_{k=1}^9 x_k^{A_{ik}} = (-1)^{\sum_{k=1}^9 A_{ik}\tilde{x}_k} = (-1)^{b_i}\; .
\end{equation}
However, as is easily seen by multiplying the constraints in Eq.~(\ref{eq: MP-square constraints}), no such solution exists. The Mermin-Peres square thus constitutes a proof of \emph{quantum contextuality}, that is, the impossibility of assigning spectral values to observables independent of other commuting, and thus simultaneously measurable observables.  \rev{More generally, we will call a group $G \subset U(d)$ \emph{noncontextual} if it admits a spectral value assignment, i.e., a map $v: G \ra U(1)$ such that (i) $v(g) \in \mathrm{sp}(g)$ and (ii) $v(gg') = v(g)v(g')$ whenever $[g,g'] := gg'g^{-1}g'^{-1} = \one$.\footnote{Here, $\mathrm{sp}(g) = \{\lambda \in U(1) \mid \det(g-\lambda\one)=0\}$ denotes the \emph{spectrum of $g$}, that is, the set of eigenvalues of $g$. } Otherwise $G$ is called \emph{contextual}. Note that any classical solution to Eq.~(\ref{eq: MP-square LCS}) would amount to a value assignment for $\mc{P}^{\otimes 2}_2$ in Fig.~\ref{fig: MP square}.}

Generalising this example, one defines linear constraint systems (LCS) `$Ax=b \mod d$' for any $A \in M_{m, n}(\zz_d)$ and $b \in \zz^m_d$.  A \emph{classical solution} to a LCS is a vector $\tilde{x} \in \zz_d^n$ solving the system of linear equations $A\tilde{x}=b \mod d$, whereas a \emph{quantum solution} is an assignment of unitary operators $x \in U^{\otimes n}(d)$ such that (i) $x_k^d = \mathbbm{1}$ for all $k \in \{1,\cdots,n\}$ (`$d$-torsion'), (ii) $[x_k,x_{k'}] := x_kx_{k'}x_k^{-1}x_{k'}^{-1} = \mathbbm{1}$ whenever $(k,k')$ appear in the same row of $A$, i.e., $A_{lk} \neq 0 \neq A_{lk'}$ for some $l \in \{1,\cdots,m\}$ (`commutativity'), and (iii) $\prod_{k=1}^n x^{A_{lk}} = \omega^{b_l}$ for all $l \in \{1,\cdots,m\}$, and where $\omega = e^{\frac{2\pi i}{d}}$ `constraint satisfaction').\footnote{With \cite{OkayRaussendorf2020}, we call an \emph{operator solution} an assignment $x \in U^{\otimes n}(d)$ satisfying (i) and (iii), but not (ii).} For more details on linear constraint systems we refer to \cite{CleveLiuSlofstra2017,Slofstra2019,QassimWallman2020}.\\

\section{On Paulis and Cliffords}\label{sec: MBQC-group}

This section builds up to the definition of the groups $K_Q^{\otimes n}$ in Sec.~\ref{sec: definition K-group}, which encompasses Pauli and diagonal Clifford operators. We first review some basic facts about the Heisenberg-Weyl group in Sec.~\ref{sec: Pontryagin duality}, before we extend it by additional operators, motivated by the diagonal Clifford hierarchy as shown in Sec.~\ref{sec: Diagonal Clifford hierarchy}.

\subsection{Pontryagin duality and Heisenberg-Weyl group}\label{sec: Pontryagin duality}

Given any \rev{finite} abelian group $G$, we define its Pontryagin dual $\hat{G}$ as the set of group homomorphisms into the unitary group $U(1)$,
\begin{equation}\label{eq: Pontryagin dual}
    \hat{G} := \mathrm{Hom}(G,U(1)) = \{\chi: G \rightarrow U(1)\ \mid \chi(gh) = \chi(g)\chi(h)\ \forall g,h \in G\}\; .
\end{equation}
$\hat{G}$ itself is a \rev{finite} abelian group under pointwise multiplication.  By Pontryagin duality,  there is a canonical isomorphism $\mathrm{ev}_G: G \rightarrow \hat{\hat{G}}$ given by
    \begin{equation*}
        \mathrm{ev}_G(x)(\chi) := \chi(x)\; .
    \end{equation*}
The Heisenberg-Weyl group $\mH(G)$ of $G$ is now defined as follows. Let \rev{$\mathbb{C}^G$} be the Hilbert space of complex-valued functions on $G$ and define the translation operators $t_x$, $x \in G$ and multiplication operators $m_\chi$, $\chi \in \hat{G}$ by
\begin{equation}\label{eq: GW generators}
    (t_x f)(y) = f(x+y) \quad \quad \quad
    (m_\chi f)(y) = \chi(y)f(y)\; ,
\end{equation}
for all $f \in \rev{\mathbb{C}^G}$.\footnote{The measure needed in the definition of $L^2(G)$ is the unique Haar measure on $G$.} Clearly, the operators in Eq.~(\ref{eq: GW generators}) do not commute, instead they satisfy the canonical Weyl commutation relations,
\begin{equation}\label{eq: Weyl CCR}
    t_xm_\chi = \overline{\chi(x)} m_\chi t_x\; , \quad \forall x \in G, \chi \in \hat{G}\; .
\end{equation}
The Heisenberg-Weyl group is the subgroup of the unitary group on the Hilbert space \rev{$\mathbb{C}^G$} generated by these operators. 

Of particular interest for quantum computation is the discrete Heisenberg-Weyl group $\mH(\zz_d)$ with $\hat{\zz}_d = \{\chi^a(q) = \omega^{aq} \mid \omega = e^{\frac{2\pi i}{d}}, a \in \zz_d\} \cong \zz_d$.\footnote{For $d$ odd, $\mc{P}_d \cong \mH(\zz_d)$, but for $d$ even these groups differ by central elements \cite{deBeaudrap2013}. In particular, note that the qubit Pauli group $\mc{P}_2 = \langle i,X,Z\rangle$, whereas $\mH(\zz_2) = \langle X,Z \rangle$.} Note that the Heisenberg-Weyl group can also be seen as a subgroup of the normaliser $\mH(\zz_d) \subset N(T)$ of a maximal torus of the special unitary group,
\begin{align}\label{eq: maximal torus SU(d)}
    T= T(SU(d)) = \{S_\xi = \mathrm{diag}(\xi(0),\cdots,\xi(d-1)) \in SU(d) \mid \xi: \zz_d \ra U(1), \prod_{q\in \zz_d} \xi(q)=1\}\; .
\end{align}
\rev{See \cite[Chapter 11]{Hall} for details about maximal tori in Lie groups.} There is a split extension
\begin{equation}\label{eq: split extension - normaliser}
	1\to T\to N(T)\to W(T) \to 1\; ,
\end{equation}
where $N(T)$ is the normalizer of $T$ and $W(T) = N(T)/T$ is called the Weyl group.  \rev{In the case of the special unitary group $SU(d)$,  $W(T) \cong \mathrm{Sym}(d)$ is the symmetric group on $d$ elements.} It induces an action on $T$ by
\begin{equation}{\label{eq-action}}
x\cdot t=xtx^{-1}\; ,
\end{equation}
where $x\in W(T) \rev{\cong \mathrm{Sym}(d)}$ on $t\in T$.  Let $X$ 
be the generalised Pauli shift operator. The group $\langle X\rangle\cong\zz_d$ generated by $X$ can be identified with a subgroup of $W(T) \rev{\cong \mathrm{Sym}(d)}$.  In particular,  note that the Heisenberg-Weyl group arises from the split extension in Eq.~(\ref{eq: split extension - normaliser}), by restriction to the Pontryagin dual
\begin{equation}\label{eq: split extension - HW group}
	1\to \hat{\zz}_d \to \mH(\zz_d) \to \langle X\rangle \to 1\; .
\end{equation}
Comparing Eq.~(\ref{eq: split extension - HW group}) with Eq.~(\ref{eq: split extension - normaliser}) suggests to extend the Pontryagin dual $\hat{\zz}_d$ in the definition of the (discrete) Heisenberg-Weyl group to a maximal torus subgroup of the special unitary group.\footnote{\rev{The restriction to the special unitary group $SU(d)$ in Eq.~(\ref{eq: maximal torus SU(d)}) ensures that (local measurement) operators $M(\xi,b) = S_\xi X^b$,  with $b\neq 0$ and $S_\xi \in T$ have order $d$ (Lm.~\ref{lem-torsion-K}). We also remark that the extension of $\hat{\zz}_d$ to $T$ is closely related to the diagonal Clifford hierarchy (cf. \cite{CuiGottesmanKrishna2017}), which embeds into the maximal torus of the unitary group $T(U(d))$ (see also Remark.~\ref{rm: diagonal Clifford hierarchy}).}}

\subsection{Diagonal Clifford hierarchy}\label{sec: Diagonal Clifford hierarchy}

\textbf{Clifford hierarchy.} Recall that the \emph{Clifford hierarchy} is defined recursively from $\mc{C}_1(d) := \mc{P}_d$ and $\mc{C}_{k+1}(d) := \{C \in U(d) \mid \forall M \in \mc{P}_d: CMC^\dagger \subset \mc{C}_k(d)\}$, where we follow the notation in Ref.~\cite{deBeaudrap2013}.  $\mc{C}_2(d)$ is also known as the \emph{Clifford group}, it is the normaliser of the qudit Pauli group $\mc{P}_d$; note that $\mc{C}_k(d)$ is not a group for $k>2$. We remark that the Clifford hierarchy is of fundamental importance in quantum computation with magic state injection \cite{GottesmanChuang1999,HowardVala2012,BengtssonBlanchfieldCampbellHoward2014,Howard2015}. In particular, the Gottesman-Knill theorem proves that the so-called stabiliser subtheory \cite{Gottesman1997,Fujii2015}, which encompasses all unitary Clifford circuits on eigenstates of $n$-qubit Pauli group, and Pauli-measurements can be efficiently classically simulated \cite{Gottesman1998,AaronsonGottesman2004,deBeaudrap2013}. While stabiliser subtheory is not universal for quantum computation, adding a single additional unitary gate, e.g. the $T$-gate ($\frac{\pi}{4}$-phase gate) to any set of generators of the Clifford group does yield a universal gate set. 

\textbf{Diagonal Clifford hierarchy.} Similarly, the \emph{diagonal Clifford hierarchy} for $d$ odd is defined recursively from $\mc{D}_1(d) := \widehat{\zz}_d$ (see Sec. ~\ref{sec: Pontryagin duality}), and $\mc{D}_{k+1}(d) := \{S_\gamma = \mathrm{diag}(\gamma(0),\cdots,\gamma(d-1)) \in U(d) \mid \forall M \in \mH(\zz_d):\ S_\gamma MS_\gamma^\dagger \subset \mc{D}_k(d)\}$.\footnote{The definition of $(\mc{D}_k(d))_{k \in \mathbb{N}}$ notably depends on a choice of basis, viz. maximal torus $T(U(d)) \subset U(d)$ (cf. Eq.~(\ref{eq: maximal torus SU(d)}) below).} 
We will be particularly interested in the subgroups of the \emph{special diagonal Clifford hierarchy} $\mc{SD}_k(d) := \mc{D}_k(d) \cap SU(d)$ for all $k \in \mathbb{N}$, since operators $S_\xi X^b$ with $S_\xi \in \mc{SD}_k(d)$ are $d$-torsion by Lm.~\ref{lem-torsion-K} below. 

\begin{lemma}\label{lm: diagonal Clifford hierarchy}
	Let $\xi: \zz_d \ra U(1)$ be a function with $d$ odd. If $S_\xi \in \mc{SD}_k(d)$ then $S^{d^{k+1}}_\xi = \one$.
\end{lemma}

\begin{proof}
	The case $k=1$ is immediate. The general case $k>1$ follows by induction from the recursive definition of the (special) diagonal Clifford hierarchy. More precisely, for any diagonal element $S_\xi$ with $\xi: \zz_d \ra U(1)$ the canonical Weyl commutation relations in Eq.~(\ref{eq: Weyl CCR}) generalise to
	\begin{equation}\label{eq: d operator}
    	S_\xi X S^\dagger_\xi = S_{\Delta\xi} X\; ,
	\end{equation}
	where we define $\Delta\xi: \zz_d \rightarrow U(1)$ by $\Delta\xi(q) :=  \frac{\xi(q+1)}{\xi(q)} \in U(1)$. In particular,
	\begin{equation*}
    	\prod_{q=0}^{d-1} \Delta\xi(q)
    	= \prod_{q=0}^{d-1} \frac{\xi(q+1)}{\xi(q)}
    	= \frac{\prod_{q=0}^{d-1}\xi(q+1)}{\prod_{q=0}^{d-1}\xi(q)}
    	= \frac{1}{1} = 1\; .
	\end{equation*}
	By definition, $\mc{SD}_{k+1}(d)$ is the pre-image to $\mc{SD}_k(d)$ of the operator $\Delta: T \ra T$ for $T$ a maximal torus in $SU(d)$. 	Now, assume that the assertion holds for $k-1$ and let $S_{\xi'} \in \mc{SD}_{k-1}(d)$. From $\xi'(q) = \Delta\xi(q) = \frac{\xi(q+1)}{\xi(q)}$ we find that any pre-image $S_\xi \in \mc{SD}_k(d)$ of $S_{\xi'}$ under $\Delta$ satisfies $\xi(q+1) = \xi(q)\xi'(q)$. Let $\xi(0) = e^{ic}$ with $c \in \R$ arbitrary, then $\xi(q) = e^{ic}\prod_{k=0}^{q-1}\xi'(k)$. Moreover, since $S_\xi \in SU(d)$ by assumption we have $\xi^{-1}(0) = \prod_{q=1}^{d-1}\xi(q) = e^{i(d-1)c}\prod_{q=1}^{d-1}\prod^{q-1}_{k=0}\xi'(k)$, from which it follows that $e^{idc} = \prod_{q=1}^{d-1}\prod^{q-1}_{k=0}\xi'(k)$. Finally, since $\xi'(k)$ is a $k$-th root of unity by the inductive hypothesis, it follows that $e^{ic}$ is a $k+1$-th root of unity.
\end{proof}

It follows that not every operator $S_\xi$ with $\xi: \zz_d \ra U(1)$ is an element of $\mc{D}_k(d)$ for some $k\in\N$. For instance, $\xi = (1,e^{i\sqrt{2}},e^{-i\sqrt{2}}) \notin \mc{D}_k(3)$ for any $k \in \mathbb{N}$. Nevertheless, as a consequence of Thm.~\ref{thm-poly} in App.~\ref{sec: app K_Q(p)}, $S_\xi \in \mc{D}_k(d)$ if and only if $S^m_\xi = \one$ for some $m \in \mathbb{N}$. 

\rev{Contextuality has been identified as a resource in measurement-based quantum computation (MBQC) \cite{Raussendorf2013,FrembsRobertsBartlett2018}. Motivated in part by the universality of Pauli and diagonal Clifford gates within the restricted subtheory of deterministic, non-adaptive MBQC with linear side-processing (on a GHZ resource state) \cite{FrembsRobertsCampbellBartlett2023}, in the next section, we will define groups generated from tensor products of these operators.}

\subsection{Definition of $K_Q^{\otimes n}(d)$}\label{sec: definition K-group}

In this section, we define a group generated from operators $S_\xi X^b$ that are $d$-torsion in the sense that  $(S_\xi X^b)^d=\one$. For an integer $m\geq 1$, let $T_{(d^m)}$ denote the subgroup of the maximal torus of the special unitary group $T$ consisting of elements of order $d^m$; that is $T_{(d^m)} =\set{S_\xi\in T|\, S_\xi^{d^m}=\one}$. In particular, when $d$ is an odd prime, the subgroup $T_{(d)}$ consists of elements $S_\xi$ where $\xi$ is of the form  $\xi(q)=a_0+a_1q+a_2q^2+\cdots+a_{d-1}q^{d-1}$ with $a_i\in \zz_d$ for all $0\leq i\leq d-1$ (cf. Eq.~(\ref{eq:xi})).

\begin{definition}\label{def: local MBQC-group}
For a subgroup $T_{(d)}\subset Q\subset T_{(d^m)}$  invariant under the action of $\Span{X}$ we define the following subgroup of the special unitary group:     
\[K_Q(d) = \langle S_\xi X^b\suchthat  b \in \zz_d,S_\xi \in Q \rangle \subset SU(d)\; ,\]
 where $X|q\rangle = |q+1\rangle$ is the generalised Pauli shift operator and $S_\xi|q\rangle = \xi(q)|q\rangle$ is a generalised phase gate. \footnote{Note that not all (diagonal) elements in $K_Q(d)$ have order $d$. Such operators do not arise as local measurement operators in (deterministic, non-adaptive) $ld$-MBQC, they are merely a byproduct of Def.~\ref{def: local MBQC-group}.}

We denote the $n$-fold Kronecker tensor product of $K_Q(d)$ by
     \[K^{\otimes n}_Q(d)=\overbrace{K_Q(d)\otimes\cdots\otimes K_Q(d)}^{\text{n copies}}.\]
Similarly $H^{\otimes n}(\zz_d)$ denotes the 
    $n$-fold Kronecker tensor product of $H(\zz_d)$.
\end{definition}
 
Note that $K_Q(d) = \mH(\zz_d)$ reduces to the discrete Heisenberg-Weyl group\footnote{For $d$ odd, $\mc{P}_d \cong \mH(\zz_d)$, but for $d$ even these groups differ by central elements \cite{deBeaudrap2013}. In particular, note that the qubit Pauli group $\mc{P}_2 = \langle i,X,Z\rangle$, whereas $\mH(\zz_2) = \langle X,Z \rangle$.} for $Q=T_{(d)}$ where $d$ is an odd prime. In this paper we will consider the other extreme that is when $Q=T_{(d^m)}$.  

Note also that $K_Q(d) = Q \rtimes \langle X \rangle \cong Q \rtimes \zz_d$ is a split extension,  (as a restriction of Eq.~(\ref{eq: split extension - normaliser})) 
\begin{equation}\label{eq: split extension}
    1 \ra Q \ra K_Q(d) \ra \langle X \rangle \ra 1\; .
\end{equation}
We will therefore often write $S_\xi X^b\in K_Q(d)$ as $(\xi,b)$, where $b\in\Z_d$ and $S_\xi\in Q$. By Eq. (\ref{eq-action}), the group action of $\Z_d$ on $Q$ is given by
\[S_{b\cdot\xi}|q\rangle:=X^b\cdot S_\xi|q\rangle=X^bS_\xi X^{-b}|q\rangle=X^bS_\xi |q-b\rangle=X^b\xi(q-b)|q-b\rangle=\xi(q-b)|q\rangle\; ,\]
that is $b\cdot\xi(q)=\xi(q-b)$ for all $q\in\zz_p$.  The group operation of $K_Q(d)$ is given by
\[(\xi,b)(\xi',b')=(\xi b\cdot \xi',b+b')\; ,\]
and the inverse is given by
\[(\xi,b)^{-1}=((-b)\cdot \xi^{-1},-b)\; ,\]
where $(\xi,b),(\xi',b')\in K_Q(d)$.

\begin{lemma}{\label{lem-for}}
Let $(\xi,b),(\chi,0)\in K_Q(d)$. Let $n$ be a positive integer, we have
\begin{enumerate}[(1)]

\item $(\xi,b)^n=\left(\prod_{i=0}^{n-1}(ib)\cdot\xi \, , \, nb\right)$.

\item $(\xi,b)^n(\chi,0)=((nb)\cdot\chi,0)(\xi,b)^n$. Thus we also have
$(\chi,0)(\xi,b)^n=(\xi,b)^n((-nb)\cdot\chi,0)$.

\item $\big((\xi,b)(\chi,0)\big)^n=\big(\prod_{i=1}^n(ib)\cdot\chi,0\big)(\xi,b)^n$.
\end{enumerate}
\end{lemma}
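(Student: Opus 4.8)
The plan is to establish all three identities by induction on $n$, using only three elementary features of the semidirect product $K_Q(d) = Q \rtimes \langle X\rangle$ recorded just above the lemma: the group law $(\xi,b)(\xi',b') = (\xi\,(b\cdot\xi'),\, b+b')$; the fact that $Q$ is abelian, since its elements are diagonal and multiply pointwise; and the fact that $b \mapsto (b\,\cdot\,)$ is a genuine action of $\zz_d$ on $Q$, so that $(nb)\cdot((-nb)\cdot\chi) = \chi$, with each $b\cdot(-)$ moreover a homomorphism of $Q$ because $\big(b\cdot(\xi\chi)\big)(q) = \xi(q-b)\chi(q-b)$. With these in hand the three statements become pure bookkeeping.

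For part (1) I would run the induction with base case $n=1$, where $\prod_{i=0}^{0}(ib)\cdot\xi = 0\cdot\xi = \xi$ and the second coordinate is $1\cdot b = b$, recovering $(\xi,b)$. For the inductive step I multiply $(\xi,b)^n = \big(\prod_{i=0}^{n-1}(ib)\cdot\xi,\ nb\big)$ on the right by $(\xi,b)$; the group law contributes the single extra factor $(nb)\cdot\xi$ in the first coordinate, which is exactly the $i=n$ term, so the product extends to $\prod_{i=0}^{n}(ib)\cdot\xi$ while the second coordinate advances to $(n+1)b$. Note that no distributivity of the action is needed here, only the group law.

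Part (2) follows directly from (1). Writing $(\xi,b)^n = (\Xi,\,nb)$ with $\Xi := \prod_{i=0}^{n-1}(ib)\cdot\xi$, the two sides $(\Xi,nb)(\chi,0)$ and $((nb)\cdot\chi,0)(\Xi,nb)$ both collapse under the group law to first coordinates $\Xi\,\big((nb)\cdot\chi\big)$ and $\big((nb)\cdot\chi\big)\,\Xi$ with common second coordinate $nb$, and these agree because $Q$ is abelian. The second displayed identity of (2) is then obtained by substituting $\chi \mapsto (-nb)\cdot\chi$ in the first and simplifying $(nb)\cdot\big((-nb)\cdot\chi\big) = \chi$. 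Part (3) then uses (2) as its engine: setting $g = (\xi,b)$, $h = (\chi,0)$, and $\Psi_n := \prod_{i=1}^n (ib)\cdot\chi$, I induct with base case $n=1$ checked from the group law and abelianness, and for the step write $(gh)^{n+1} = (\Psi_n,0)\,g^{n}\,(gh) = (\Psi_n,0)\,g^{n+1}\,h$; pulling $h=(\chi,0)$ leftward past $g^{n+1}$ via part (2) produces the factor $\big((n+1)b\big)\cdot\chi$, and multiplying the two diagonal elements $(\Psi_n,0)$ and $\big(((n+1)b)\cdot\chi,\,0\big)$ in $Q$ extends the product to $\Psi_{n+1}$, yielding $(\Psi_{n+1},0)\,g^{n+1}$.

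I do not expect a genuine obstacle here, as the argument is entirely mechanical. The only points demanding care are the index ranges of the products --- $i$ running over $0,\dots,n-1$ in (1) but over $1,\dots,n$ in (3) --- and keeping the order of factors straight in the first coordinates; the latter is ultimately harmless because $Q$ is abelian, but it must be tracked correctly so that the commutation in part (2) is applied on the right side.
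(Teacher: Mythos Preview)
Your proposal is correct and follows essentially the same approach as the paper: induction on $n$ for parts (1) and (3), with part (2) derived from part (1) via the abelianness of $Q$, and part (2) then serving as the commutation step in the induction for (3). The only cosmetic difference is that the paper first isolates the $n=1$ case of (2) as a separate displayed identity before applying it to the general power, whereas you compute both sides of (2) directly for arbitrary $n$; the content is the same.
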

\begin{proof}
The proofs of all three statements are straightforward; see App.~\ref{pf useful formula} for details.
\end{proof}

\section{Properties}\label{sec: group properties}

We are interested in quantum solutions to LCS in $K^{\otimes n}_Q(p)$, \rev{the group generated by Pauli and diagonal Clifford operators}; to this end, we first analyse its abelian subgroups in the next section, before defining a map from $K^{\otimes n}_Q(p)$ to the Heisenberg-Weyl group $H^{\otimes n}(\zz_p)$ that preserves $p$-torsion abelian subgroups in Sec.~\ref{sec: homomorphism of abelian subgroups of order p}. This map will allow us to prove our main result, Thm.~\ref{thm: main result - Clifford hierarchy} in Sec.~\ref{sec: from noncommutative to commutative LCS}. \rev{This section is of a technical nature and may be skipped on first reading. }

\subsection{Abelian subgroups of $K^{\otimes n}_Q(p)$}

From now on,  we fix $p$ to be an odd prime. In this section we will describe ($p$-torsion) abelian subgroups of $K^{\otimes n}_Q(p)$ where $Q=T_{(p^m)}$.  
For simplicity, we will use the following notation:
$$
K=K_Q(p)\;\;\text{ and }\;\; K^{\otimes n}= K^{\otimes n}_Q(p)\; .
$$
We begin with a description of $p$-torsion elements in $K$.

\begin{lemma}\label{lem-torsion-K}
An element $M\in K$ is $p$-torsion, i.e. $M^p=\one$, if either
\begin{itemize}
\item $M=S_\xi$ for some $S_\xi\in T_{(p)}$, or
\item $M=S_\xi X^b$ with $b\neq 0$.
\end{itemize}
\end{lemma}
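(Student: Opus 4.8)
The plan is to prove the two sufficient conditions separately, using the structural formula for powers in $K$ provided by Lemma~\ref{lem-for}(1), namely $(\xi,b)^n = \left(\prod_{i=0}^{n-1}(ib)\cdot\xi,\, nb\right)$.

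\textbf{Case $M = S_\xi$ with $S_\xi \in T_{(p)}$.} Here $M = (\xi,0)$ lies in the diagonal subgroup $Q = T_{(p^m)}$, and being in $T_{(p)}$ means precisely $S_\xi^p = \one$ by the very definition of $T_{(p)} = \{S_\xi \in T \mid S_\xi^p = \one\}$. So this case is immediate from the definition and requires essentially no computation; I would simply note that $(\xi,0)^p = (\xi^p, 0) = (\one, 0)$ since $\xi^p$ is the trivial character when $S_\xi \in T_{(p)}$.

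\textbf{Case $M = S_\xi X^b$ with $b \neq 0$.} This is the substantive case. Setting $n = p$ in Lemma~\ref{lem-for}(1), I compute
\begin{equation*}
    (\xi,b)^p = \left(\prod_{i=0}^{p-1}(ib)\cdot\xi,\; pb\right)\; .
\end{equation*}
Since $pb \equiv 0 \bmod p$, the shift part vanishes and I am left with a purely diagonal element $\big(\prod_{i=0}^{p-1}(ib)\cdot\xi,\, 0\big)$. The task reduces to showing this product of shifted characters is trivial. Using the action $b\cdot\xi(q) = \xi(q-b)$, the character $\prod_{i=0}^{p-1}(ib)\cdot\xi$ evaluated at $q$ is $\prod_{i=0}^{p-1}\xi(q-ib)$. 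Because $b \neq 0$ and $p$ is prime, $i \mapsto ib$ is a bijection of $\zz_p$, so $\{q - ib : i \in \zz_p\}$ runs over all of $\zz_p$; hence $\prod_{i=0}^{p-1}\xi(q-ib) = \prod_{r\in\zz_p}\xi(r)$, which is independent of $q$. Finally, this equals $1$ by the defining constraint of the maximal torus of $SU(p)$ in Eq.~(\ref{eq: maximal torus SU(d)}), namely $\prod_{q\in\zz_p}\xi(q) = 1$. Therefore $(\xi,b)^p = (\one,0) = \one$.

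\textbf{Main obstacle.} The computation is short, and the only subtle point is making precise why the product $\prod_{i=0}^{p-1}(ib)\cdot\xi$ collapses to $\prod_{r\in\zz_p}\xi(r)$: this rests on the bijectivity of multiplication by $b \neq 0$ in the field $\zz_p$ (here primality of $p$ is essential) together with the $SU(p)$ normalisation $\prod_q \xi(q)=1$. I expect no real difficulty, but I would be careful to state clearly that this is where primality and the restriction to the \emph{special} unitary group (rather than the full unitary group) are both used — indeed the footnote after Eq.~(\ref{eq: maximal torus SU(d)}) already flags that this $SU(d)$ restriction is exactly what guarantees operators $S_\xi X^b$ with $b \neq 0$ have order $d$.
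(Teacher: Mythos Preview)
Your argument is correct and follows exactly the paper's approach: the paper's proof is the one-liner ``Follows from Lm.~\ref{lem-for}(1), and our assumption $S_\xi \in T$, which implies $\det(S_\xi) = \prod_{k=0}^{p-1} k\cdot\xi = 1$,'' and you have simply unpacked those two ingredients in detail, including the bijectivity of $i\mapsto ib$ on $\zz_p$ that makes the product collapse to $\det(S_\xi)$.
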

\begin{proof}
Follows from Lm. \ref{lem-for}(1), and our assumption $S_\xi \in T$, which implies $\det(S_\xi) = \prod_{k=0}^{p-1} k\cdot\xi = 1$.
\end{proof}

Next, we describe commuting pairs of elements in $K$ up to a phase.

\begin{lemma}\label{lm: commutation relations gen HW group up to phase}
Let $M=S_\xi X^b$ and $M'=S_{\xi'}X^{b'}$ be elements in $K$ such that $[M,M']=\omega^c \one$ for some $c\in\zz_p$.         
Then one of the following cases hold:
\begin{enumerate}
\item $b=b'=0$ and $M,M'\in Q$. In particular, $c=0$.
\item $b\neq 0$ and $b'=0$ and  $M'\in\hat{\zz}_p$. In addition, if $c=0$, then $M'\in Z(K)$.
\item  $b,b'\neq 0$ and there exits $a,y\in\zz_p$ and $S_\chi \in\hat{\zz}_p$ such that $\omega^a M'=(M S_\chi)^y$. Here $\chi(q)=\omega^{c'q}$ where $b'c'=-c$ and $yb=b'$. In addition, if $c=0$, then $\omega^a M'=M^y$. 
\end{enumerate}
\end{lemma}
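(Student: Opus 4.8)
The plan is to work entirely in the split-extension coordinates $(\xi,b)$ for $S_\xi X^b$ and to translate the hypothesis $[M,M']=\omega^c\one$, equivalently $MM'=\omega^c M'M$, into a single functional equation on the phase functions. Using the group law and the action $b\cdot\xi(q)=\xi(q-b)$ recorded before Lm.~\ref{lem-for}, the two sides $MM'$ and $\omega^c M'M$ automatically share the $X$-component $b+b'$ (this reflects that $K/Q\cong\Span{X}$ is abelian, so every commutator already lies in $Q$), and equating phase functions gives the master equation
\[
\xi(q)\,\xi'(q-b)=\omega^c\,\xi'(q)\,\xi(q-b')\qquad\forall q\in\zz_p.
\]
The trichotomy in the statement is then governed by which of $b,b'$ vanish, and I would treat the three cases by exploiting this equation together with the fact that a nonzero $b$ generates the cyclic group $\zz_p$.

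Cases (1) and (2) are short. If $b=b'=0$ then $M,M'\in Q$ are diagonal and commute, forcing $c=0$. If $b\neq0$ and $b'=0$, the master equation collapses to the recurrence $\xi'(q-b)=\omega^c\,\xi'(q)$; since $b$ generates $\zz_p$ this pins $\xi'$ down up to a constant as $\xi'(q)=\omega^{a+c'q}$ with $bc'=-c$, and the determinant condition $\prod_q\xi'(q)=1$ (using $p$ odd, so $p\mid p(p-1)/2$) forces the constant to be a $p$-th root of unity, whence $S_{\xi'}\in T_{(p)}$; when $c=0$ the phase is constant, i.e.\ $M'$ is scalar and hence central.

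The heart is case (3), where $b,b'\neq0$. My strategy is to first absorb the phase $\omega^c$ into a linear correction, and then recognise $M'$ as a genuine power. I set $\chi(q)=\omega^{c'q}$ with $b'c'=-c$ (so $S_\chi\in T_{(p)}$ by the same determinant check) and put $N:=M S_\chi$. A direct substitution into the master equation shows that replacing $\xi$ by the phase function of $N$ multiplies the right-hand scalar $\omega^c$ by $\omega^{c'b'}$, so the condition $b'c'=-c$ is exactly what upgrades commutation-up-to-phase to genuine commutation $[N,M']=\one$. I would then set $y=b'b^{-1}\in\zz_p$, so that by Lm.~\ref{lem-for}(1) the power $N^y$ has $X$-component $yb=b'$, matching that of $M'$.

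Finally, $M'(N^y)^{-1}$ is diagonal and commutes with $N$, so the proof reduces to the following rigidity fact: a diagonal $S_\eta$ commuting with $S_{\tilde\xi}X^b$ for $b\neq0$ must satisfy $\eta(q-b)=\eta(q)$, whence $\eta$ is constant because $b$ generates $\zz_p$, and the determinant constraint forces that constant to be $\omega^{-a}$ for some $a\in\zz_p$. This yields $\omega^a M'=N^y=(M S_\chi)^y$, and when $c=0$ we get $c'=0$, $S_\chi=\one$, and $\omega^a M'=M^y$. The main obstacle I anticipate is the bookkeeping in the master-equation substitution that fixes $\chi$; the conceptually load-bearing step, used repeatedly, is the primality of $p$, which guarantees that any nonzero $b$ generates $\zz_p$ and thereby forces the various ``$b$-periodic'' phase functions to be constant.
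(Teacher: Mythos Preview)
Your proposal is correct and follows the paper's overall architecture: derive the master equation $\xi(q)\xi'(q-b)=\omega^c\xi'(q)\xi(q-b')$ from the commutator (the paper computes $[M,M']$ directly, you use $MM'=\omega^c M'M$, but the resulting identity is the same), then split into the three cases, using in case (3) the same choices $\chi(q)=\omega^{c'q}$ with $b'c'=-c$ and $y=b'b^{-1}$.

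The one genuine difference is how you finish case (3). The paper expands $(MS_\chi)^y=(\bar\xi,b')$ via Lm.~\ref{lem-for}(1) and then computes the telescoping ratio $\bar\xi(q)/\bar\xi(q-b)$ explicitly, matching it against $\xi'(q)/\xi'(q-b)$ to conclude $\bar\xi/\xi'$ is constant. You instead first verify $[N,M']=\one$ for $N=MS_\chi$ (your substitution check is correct: the scalar shifts by $\omega^{c'b'}$), then observe that $M'(N^y)^{-1}$ is diagonal and commutes with $N$, which by the already-established case (2) with $c=0$ forces it into $Z(K)$. This reduction-to-case-(2) argument is cleaner and avoids the product manipulation in the paper; conversely, the paper's direct computation makes the phase $\bar\xi$ fully explicit, which is not needed here but could be useful downstream. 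Both routes rely on the same load-bearing fact you identify: for $p$ prime, any nonzero $b$ generates $\zz_p$, so $b$-periodic phases are constant.
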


\begin{proof} We will use the alternative notation $(\xi,b)=S
_\xi X^b$. 
First, we calculate the commutator:
\begin{equation}{\label{eqq}}
\begin{aligned}
[M,M']&=(\xi,b)(\xi',b')(\xi,b)^{-1}(\xi',b')^{-1}\\
&=\left(\xi b\cdot\xi',b+b'\right) \,\, \left((-b)\cdot\xi^{-1},-b\right)\left((-b')\cdot\xi'^{-1},-b'\right)\\
&=\left(\xi b\cdot\xi',b+b'\right) \left( (-b)\cdot\xi^{-1}(-b-b')\cdot\xi'^{-1},-b-b'\right)\\
&=\left(\xi b\cdot\xi'(b+b'-b)\cdot\xi^{-1}(b+b'-b-b')\cdot\xi'^{-1},b+b'-b-b'\right)\\
&=\left(\xi b\cdot\xi' b'\cdot\xi^{-1}\xi'^{-1},0\right)\; .
\end{aligned}
\end{equation}
Thus we have $[M,M']=(\hat{\xi},0)$, where $\hat{\xi}(q)=\xi(q)\xi'(q-b)\xi^{-1}(q-b')\xi'^{-1}(q)$ for all $q \in \zz_p$.

Case 1: We assume $b=b'=0$. In this case, we have $M,M'\in Q$. This group is abelian, thus $c=0$. 

Case 2: We assume $b\neq 0$ and $b'=0$. In this case, we have
\begin{align*}
[M,M']=\omega^c \one 	&\iff \xi(q)\xi'(q-b)\xi^{-1}(q)\xi'^{-1}(q)=\omega^c\\
				&\iff \xi'(q-b)=\xi'(q)\omega^c\\
				&\iff \xi'(q)=\xi'(q+b)\omega^c\; .
\end{align*}
Let $m\in\zz_p$ such that $mb=-1$. Thus we have $q+mqb=0$. By the equation above, we have
\[\xi'(q)=\xi'(q+b)\omega^c=\xi'(q+2b)\omega^{2c}=\cdots=\xi'(q+mqb)\omega^{mqc}=\xi'(0)\omega^{mqc}\; .\]
By definition (cf. Eq.~(\ref{eq: maximal torus SU(d)})), we have $\prod_{q=0}^{p-1}\xi'(q)=1$. Thus we obtain
\[1=\prod_{q=0}^{p-1}\xi'(q)=\prod_{q=0}^{p-1}\xi'(0)\omega^{mqc}=(\xi'(0))^p\omega^{mc\sum_{q=0}^{p-1}q}=(\xi'(0))^p\; ,\]
where we used that $\sum_{q=0}^{p-1}q=\frac{p(p-1)}{2}$ is divisible by $2$, hence, $\sum_{q=0}^{p-1}q$ is a multiple of $p$ and thus $\omega^{mc\sum_{q=0}^{p-1}q}=1$.
Finally, we set $\xi'(0)=\omega^a$ for some $a\in\zz_p$ and obtain
\[\xi'(q)=\xi'(0)\omega^{mqc}=\omega^a\omega^{mqc}=\omega^{mcq+a}\; ,\]
therefore $M'=(\xi',b')\in\hat{\zz}_p$. In particular, if $c=0$ we have $\xi'(q)=\omega^a$, and as a result $M'\in Z(K)$. 

Case 3: Assume that $b,b'\neq 0$. We pick $c'\in \zz_p$ and $y\in\zz_p$ such that $b'c'=-c$ and $yb=b'$, and we define $\chi(q)=\omega^{c'q}$. By Lm. \ref{lem-for}(1), we have
\begin{equation}{\label{eq3}}
(MS_\chi)^y=\left((\xi,b)(\chi,0)\right)^y=(\xi b\cdot\chi,b)^y=(\bar{\xi},yb)=(\bar{\xi},b')\; ,
\end{equation}
where $\bar{\xi}=\left(\prod_{i=0}^{y-1}(ib)\cdot \xi\right)\left(\prod_{i=0}^{y-1}(ib+b)\cdot\chi\right)$. Since $[M,M']=\omega^c \one$, by Eq. (\ref{eqq}), we have
\begin{equation}{\label{eq2}}
\xi(q)\xi'(q-b)\xi^{-1}(q-b')\xi'^{-1}(q)=\omega^c\iff \frac{\xi(q)}{\xi(q-b')}=\frac{\xi'(q)}{\xi'(q-b)}\omega^c\; .
\end{equation}
Then we calculate
\begin{align*}
\frac{\bar{\xi}(q)}{\bar{\xi}(q-b)}=\frac{\bar{\xi}}{b\cdot\bar{\xi}}(q)&=\frac{\left(\prod_{i=0}^{y-1}(ib)\cdot\xi\right)\left(\prod_{i=0}^{y-1}(ib+b)\cdot\chi\right)}{\left(\prod_{i=0}^{y-1}(ib+b)\cdot\xi\right)\left(\prod_{i=0}^{y-1}(ib+b+b)\cdot\chi\right)}(q)\\
&=\frac{\xi\left(\prod_{i=1}^{y-1}(ib)\cdot\xi\right)b\cdot\chi\left(\prod_{i=1}^{y-1}(ib+b)\cdot\chi\right)}{\left(\prod_{i=1}^{y}(ib)\cdot\xi\right)\left(\prod_{i=1}^{y}(ib+b)\cdot\chi\right)}(q)\\
&=\left(\frac{\xi}{(yb)\cdot\xi}\right)\frac{b\cdot\chi}{(yb+b)\cdot\chi}(q)\\
&=\left(\frac{\xi(q)}{\xi(q-b')}\right)\frac{\omega^{(q-b)c'}}{\omega^{(q-b'-b)c'}}\,\,\,\,\,\text{(by defintion of $\chi$ and $yb=b'$)}\\
&=\frac{\xi'(q)}{\xi'(q-b)}\omega^c\omega^{b'c'}\,\,\,\,\,\,\text{(by Eq. (\ref{eq2})})\\
&=\frac{\xi'(q)}{\xi'(q-b)}\; .
\end{align*}
By rearranging this equation, we get
\[\frac{\bar{\xi}(q)}{\xi'(q)}=\frac{\bar{\xi}(q-b)}{\xi'(q-b)}\; ,\]
for all $q\in \zz_p$. Thus we conclude that there is a constant $x$ such that  for all $q\in\zz_p$, we have $\bar{\xi}(q)=x\xi'(q)$. By definition, we have $\prod_{q=0}^{p-1}\bar{\xi}(q)=\prod_{q=0}^{p-1}\xi'(q)=1.$ It follows that
\[1=\prod_{q=0}^{p-1}\bar{\xi}(q)=\prod_{q=0}^{p-1}x\xi'(q)=x^p\prod_{q=0}^{p-1}\xi'(q)=x^p\; .\] Thus $x=\omega^a$ for some $a\in \zz_p$ and we have $\bar{\xi}(q)=\omega^a\xi'(q)$ for some $a\in\zz_p$. Combining $\bar{\xi}(q)=\omega^a\xi'(q)$ with Eq. (\ref{eq3}), we obtain
\[(MS_\chi)^y=(\bar{\xi},b')=(\omega^a\xi',b')=(\omega^a,0)(\xi',b')=\omega^aM'\; .\]
In particular, if $c=0$, then $\chi(q)=1$ and $\omega^aM'=M^y$.
\end{proof}

These two results combined gives us the description of $p$-torsion abelian subgroups of $K$.

\begin{corollary}\label{cor: abelian subgroups in K_Q(p)}
Maximal $p$-torsion abelian subgroups of $K$ fall into two classes:
\begin{itemize}
\item the subgroup $T_{(p)}$, and
\item $\Span{\omega\one,S_\xi X}$ where $S_\xi\in Q=T_{(p^m)}$.
\end{itemize} 
Any two distinct maximal $p$-torsion abelian subgroup intersect at the center $Z(K)=\Span{\omega \one}$.
\end{corollary}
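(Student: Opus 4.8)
The plan is to classify the maximal $p$-torsion abelian subgroups $A\subseteq K$ by a single dichotomy: either every element of $A$ is diagonal (has trivial shift part $b=0$), or $A$ contains at least one element $S_\xi X^b$ with $b\neq 0$. These two cases are mutually exhaustive and will produce $T_{(p)}$ and the subgroups $\langle\omega\one,S_\xi X\rangle$ respectively. The crucial observation throughout is that for an \emph{abelian} subgroup every commutator is trivial, i.e.\ $c=0$ in Lm.~\ref{lm: commutation relations gen HW group up to phase}, and it is precisely the $c=0$ clauses of that lemma that collapse the possibilities.

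First I would treat the diagonal case. If every element of $A$ has $b=0$, then $A\subseteq Q$, and since $A$ is $p$-torsion, Lm.~\ref{lem-torsion-K} forces $A\subseteq T_{(p)}$; as $T_{(p)}$ is itself abelian and $p$-torsion, maximality gives $A=T_{(p)}$. To confirm that $T_{(p)}$ is not contained in a strictly larger $p$-torsion abelian subgroup, note that by Lm.~\ref{lem-torsion-K} the only $p$-torsion elements outside $T_{(p)}$ have $b\neq 0$, and by Case~(2) of Lm.~\ref{lm: commutation relations gen HW group up to phase} with $c=0$ such an element commutes with $S_{\xi'}\in T_{(p)}$ only when $S_{\xi'}\in Z(K)$; since $T_{(p)}\supsetneq Z(K)$, no such element commutes with all of $T_{(p)}$.

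Next I would treat the case where $A$ contains some $M=S_\xi X^b$ with $b\neq 0$. For any $M'=S_{\xi'}X^{b'}\in A$, commutativity means $c=0$, so I split on $b'$: if $b'=0$ then Case~(2) gives $M'\in Z(K)$, while if $b'\neq 0$ then Case~(3) gives $\omega^a M'=M^y$, hence $M'\in\langle\omega\one,M\rangle$. Either way $A\subseteq\langle\omega\one,M\rangle$, and since $\langle\omega\one,M\rangle$ is clearly abelian and $p$-torsion (its generator $S_\xi X^b$ is $p$-torsion by Lm.~\ref{lem-torsion-K}, and $\omega\one$ is central), maximality yields $A=\langle\omega\one,M\rangle$. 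Using that $b$ is invertible mod $p$ together with Lm.~\ref{lem-for}(1), the power $M^{b^{-1}}=S_{\xi''}X$ generates the same subgroup modulo the centre, so $A=\langle\omega\one,S_{\xi''}X\rangle$ has the claimed normal form.

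For the intersection statement I would show that two distinct maximal subgroups meet only in $Z(K)=\langle\omega\one\rangle$, which is contained in every such subgroup. A shared element outside the centre must carry the same nonzero $X$-exponent $k$ in both presentations; raising to the $k^{-1}$-th power (exactly the reduction used above) then shows the two generators $S_\xi X$ and $S_{\xi'}X$ differ by a scalar, forcing the subgroups to coincide, a contradiction. The mixed intersection $T_{(p)}\cap\langle\omega\one,S_\xi X\rangle$ is immediate since the latter contains no non-central diagonal element. I expect no serious obstacle: all the real content is delivered by Lm.~\ref{lm: commutation relations gen HW group up to phase}, and the only mildly delicate points are the bookkeeping in the normalisation $S_\xi X^b\mapsto S_{\xi''}X$ and the verification that the two families are genuinely disjoint (intersecting only in the centre) rather than overlapping.
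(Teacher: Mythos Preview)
Your proposal is correct and follows exactly the approach the paper intends: the paper's own proof is a one-line appeal to Lm.~\ref{lm: commutation relations gen HW group up to phase} with $c=0$ together with Lm.~\ref{lem-torsion-K}, and what you have written is a careful unpacking of that appeal (the diagonal/nondiagonal dichotomy, the normalisation $M\mapsto M^{b^{-1}}$, and the intersection argument).
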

\begin{proof}
    This follows immediately from Lm. \ref{lm: commutation relations gen HW group up to phase} with $c = 0$ and Lm. \ref{lem-torsion-K}.
\end{proof}

We reformulate Lm. \ref{lm: commutation relations gen HW group up to phase} in a way that will be useful for computing the commutator in terms of a symplectic form. 

\begin{corollary}\label{cor:commutator-K}
Let $M=S_\xi X^b$ and $M'=S_{\xi'}X^{b'}$ be elements in $K$ such that $[M,M']=\omega^c \one$ for some $c\in\zz_p$.         
Then one of the following cases hold:
\begin{enumerate}[(1)]
\item $b=b'=0$, in which case $c=0$.
\item $b\neq 0$, in which case $M'=S_{\chi'}M^{b'/b}$ where $\chi'(q)=\omega^{\alpha'_0+\alpha'_1q}$ for some $\alpha_0',\alpha_1'\in \zz_p$ and $c=-b\alpha'_1$.
\item $b'\neq 0$, in which case $M=S_{\chi}M^{b/b'}$ where $\chi(q)=\omega^{\alpha_0+\alpha_1q}$ for some $\alpha_0,\alpha_1\in \zz_p$ and  $c=b'\alpha_1$. 
\end{enumerate}
\end{corollary}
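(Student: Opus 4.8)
The plan is to deduce this corollary directly from Lemma~\ref{lm: commutation relations gen HW group up to phase}, whose three cases I will merely repackage into the symplectic-friendly form claimed here. Case (1) is immediate: when $b=b'=0$ both operators lie in the abelian group $Q$, so the commutator is trivial and $c=0$. The substance is in case (2), where I must merge the two scenarios $b'=0$ and $b'\neq 0$ of the Lemma into the single uniform statement $M'=S_{\chi'}M^{b'/b}$ with a \emph{linear} phase $\chi'(q)=\omega^{\alpha'_0+\alpha'_1 q}$ and $c=-b\alpha'_1$. Throughout, $b\neq 0$ is invertible in $\zz_p$, so $b'/b$ and the various reciprocals below are well defined.

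When $b'=0$, Lemma~\ref{lm: commutation relations gen HW group up to phase}(2) already gives $M'=S_{\xi'}\in T_{(p)}$ with $\xi'(q)=\omega^{mcq+a}$ where $mb=-1$; writing $m=-b^{-1}$ this is $\xi'(q)=\omega^{-(c/b)q+a}$, and since $M^{b'/b}=M^0=\one$ we read off $\alpha'_1=-c/b$, i.e. $c=-b\alpha'_1$. When $b'\neq 0$, Lemma~\ref{lm: commutation relations gen HW group up to phase}(3) supplies $\omega^a M'=(MS_\chi)^y$ with $\chi(q)=\omega^{c'q}$, $b'c'=-c$, and $yb=b'$, so $y=b'/b$. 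The key move is to apply Lemma~\ref{lem-for}(3) to pull the diagonal part of $(MS_\chi)^y$ to the front:
\[
(MS_\chi)^y=S_\Phi\,M^y,\qquad \Phi(q)=\prod_{i=1}^{y}(ib)\cdot\chi(q)=\prod_{i=1}^{y}\chi(q-ib)=\prod_{i=1}^{y}\omega^{c'(q-ib)}\; .
\]
This product telescopes into a single linear phase, with exponent $\sum_{i=1}^{y}c'(q-ib)=c'y\,q-c'b\,\tfrac{y(y+1)}{2}$ (division by $2$ being legitimate since $p$ is odd). Hence $\Phi(q)=\omega^{\alpha'_0+\alpha'_1 q}$ with $\alpha'_1=c'y$ and $\alpha'_0=-c'b\,\tfrac{y(y+1)}{2}$. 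Absorbing the scalar $\omega^{-a}$ into $\alpha'_0$ gives $M'=S_{\chi'}M^{b'/b}$ with $\chi'$ linear, and substituting $c'=-c/b'$ and $y=b'/b$ yields $\alpha'_1=c'y=-c/b$, so again $c=-b\alpha'_1$, matching the $b'=0$ branch.

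Finally, case (3) follows from case (2) by symmetry: since $[M',M]=[M,M']^{-1}=\omega^{-c}\one$, applying the just-proved case (2) to the pair $(M',M)$ (with $b'\neq 0$ now playing the role of the nonzero shift) gives $M=S_\chi (M')^{b/b'}$ with $\chi(q)=\omega^{\alpha_0+\alpha_1 q}$ and $-c=-b'\alpha_1$, i.e. $c=b'\alpha_1$. I expect the main obstacle to be purely bookkeeping rather than conceptual: correctly evaluating the telescoping product $\Phi$ and tracking how the Lemma's parameters $a,c',y$ translate into the linear coefficients $\alpha'_0,\alpha'_1$ and the commutator value $c$. Care is needed to verify that $\Phi$ is genuinely linear (not merely affine-looking) and that the identity $c=-b\alpha'_1$ holds identically in $\zz_p$ across both subcases.
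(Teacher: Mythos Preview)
Your proof is correct and follows essentially the same route as the paper: both arguments reduce to Lemma~\ref{lm: commutation relations gen HW group up to phase}, write $M'=S_{\chi'}M^{b'/b}$ with $S_{\chi'}\in T_{(p)}$, and then extract $c=-b\alpha'_1$. The only minor methodological difference is in that last step: you expand $(MS_\chi)^y$ via Lemma~\ref{lem-for}(3), compute the telescoping product $\Phi$ explicitly, and read off $\alpha'_1=c'y=-c/b$; the paper instead, once $M'=S_{\chi'}M^{b'/b}$ is in hand, simply applies the commutator formula Eq.~(\ref{eqq}) to $[M,S_{\chi'}]$ (noting $[M,M^{b'/b}]=\one$) to obtain $c=-b\alpha'_1$ directly. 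Your version is more explicit, the paper's slightly quicker, but they are the same argument in substance.
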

\begin{proof}
We will explain the case where at least one of $b$ or $b'$ is nonzero. Then we are either in case (2) or (3) of Lm. \ref{lm: commutation relations gen HW group up to phase}. Assume we are in case (3). We can write 
$
M' = S_{\chi'} M^{b'/b}
$
for some $S_{\chi'}\in\hat{\zz}_p$ using part (2) and (3) of Lm. \ref{lm: commutation relations gen HW group up to phase}. Using Eq. (\ref{eqq}) we compute $c=-b\alpha_1'$. Note that taking $b'=0$ we see that Lm. \ref{lm: commutation relations gen HW group up to phase} part (2) is also covered by this case. Case (3) can be dealt with in a similar way.
\end{proof}

Next we move on to analyzing $K^{\otimes n}$.
Using Thm. \ref{thm-poly} in the Appendix we can express $\xi$ as follows:
\begin{equation}\label{eq:xi}
\xi(q) = \exp\left( \sum_{j=1}^m \frac{2\pi i}{p^j} f_j(q) \right)\; ,
\end{equation}
where each $f_j$ is a polynomial of the form
\begin{equation}\label{eq: polynomial}
	f_j(q) = \sum_{a=0}^{p-1} \vartheta_{j,a} q^a,\;\;\; \vartheta_{j,a}\in \zz_p\; .
\end{equation}
We   need a generalized version of Lm. \ref{lem-torsion-K} to describe $p$-torsion elements in $K^{\otimes n}$.

\begin{lemma}\label{lem:torsion-Kn}
An element $M=M_1\otimes M_2\otimes \cdots\otimes M_n$ in $K^{\otimes n}$ is $p$-torsion if there exists a subset $I\subset \set{1,2,\cdots,n}$ such that the following holds:
\begin{itemize}
\item $M_i\notin Q$, that is, $M_i=S_{\xi_i} X^{b_i}$ with $b_i\neq 0$ for all $i\notin I$.
\item For all $i\in I$ we have $M_i\in Q$  such that $\vartheta_{j,a}^{(i)}=0$ for all $j\geq 2,a\neq 0$ and $\sum_{i\in I} \vartheta_{2,0}^{(i)}=0\mod p$.
\end{itemize} 
\end{lemma}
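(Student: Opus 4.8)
The plan is to exploit multiplicativity of the $p$-th power over tensor factors: since $M=M_1\otimes\cdots\otimes M_n$ we have $M^p=M_1^p\otimes\cdots\otimes M_n^p$, so it suffices to understand each $M_i^p$ and then track the scalars they produce. I split the factors according to the given set $I$. For $i\notin I$ we have $M_i=S_{\xi_i}X^{b_i}$ with $b_i\neq 0$, so Lm.~\ref{lem-torsion-K} immediately gives $M_i^p=\one$; these factors contribute trivially. The whole content therefore lies in the diagonal factors $i\in I$, where $M_i=S_{\xi_i}\in Q=T_{(p^m)}$, and I must show that $M_i^p$ is a scalar and evaluate it.

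For a diagonal factor I would use the expansion in Eq.~(\ref{eq:xi}), $\xi_i(q)=\exp\!\big(\sum_{j=1}^m \tfrac{2\pi i}{p^j}f_j^{(i)}(q)\big)$, and observe that raising to the $p$-th power shifts the tower of $p$-power denominators down one level: $\xi_i(q)^p=\exp\!\big(\sum_{j=1}^m \tfrac{2\pi i}{p^{j-1}}f_j^{(i)}(q)\big)$. The $j=1$ term becomes $\exp(2\pi i\,f_1^{(i)}(q))=1$, since $f_1^{(i)}(q)$ is an integer (its coefficients and $q$ lie in $\zz_p$), leaving $\xi_i(q)^p=\exp\!\big(\sum_{j\geq 2}\tfrac{2\pi i}{p^{j-1}}f_j^{(i)}(q)\big)$. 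The hypothesis $\vartheta^{(i)}_{j,a}=0$ for $j\geq 2,\ a\neq 0$ says that each $f_j^{(i)}$ with $j\geq 2$ equals the constant $\vartheta^{(i)}_{j,0}$, so $\xi_i(q)^p$ is independent of $q$; that is, $M_i^p=\lambda_i\one$ is a scalar.

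The key remaining step is to evaluate $\lambda_i$, and here I would use that $S_{\xi_i}\in Q\subset T$ forces $\det S_{\xi_i}=\prod_{q}\xi_i(q)=1$. Substituting the now-constant $f_j^{(i)}$ ($j\geq 2$) into $\prod_q\xi_i(q)=1$ and reading off the resulting identity in base $p$ digit by digit, I expect to find that the constant terms $\vartheta^{(i)}_{j,0}$ vanish for all $j\geq 3$, while $\vartheta^{(i)}_{2,0}\equiv -\sum_q f_1^{(i)}(q)\bmod p$. Consequently only the $j=2$ level survives, giving $\lambda_i=\exp\!\big(\tfrac{2\pi i}{p}\vartheta^{(i)}_{2,0}\big)=\omega^{\vartheta^{(i)}_{2,0}}$. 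This digit-by-digit reduction of the determinant constraint is the main obstacle: one must handle the integer lifts and mod-$p$ reductions consistently so that the higher $p$-adic digits are genuinely killed and the surviving phase is exactly $\omega^{\vartheta^{(i)}_{2,0}}$.

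Finally I would reassemble the factors: $M^p=\bigotimes_{i=1}^n M_i^p=\big(\prod_{i\in I}\lambda_i\big)\,\one=\omega^{\sum_{i\in I}\vartheta^{(i)}_{2,0}}\,\one$, using $M_i^p=\one$ for $i\notin I$. The last hypothesis $\sum_{i\in I}\vartheta^{(i)}_{2,0}=0\bmod p$ then yields $\omega^{\sum_{i\in I}\vartheta^{(i)}_{2,0}}=1$, so $M^p=\one$, as claimed. I note that the same computation shows these conditions are also necessary, so the criterion is in fact sharp, although only sufficiency is needed here.
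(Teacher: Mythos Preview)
Your proposal is correct and follows the same approach as the paper: split into tensor factors, invoke Lm.~\ref{lem-torsion-K} for $b_i\neq 0$, and for diagonal factors use the expansion Eq.~(\ref{eq:xi}) to see that $M_i^p=\omega^{c_i}\one$. You are in fact more thorough than the paper, which leaves the identification $c_i=\vartheta_{2,0}^{(i)}$ (via the determinant constraint and the resulting vanishing of $\vartheta_{j,0}^{(i)}$ for $j\geq 3$) entirely implicit.
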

\begin{proof}
It suffices to show that $M_i\in Q$ satisfies $M_i^p = \omega^{c_i}\one$ for some $c_i \in \zz_p$ whenever $\vartheta_{j,a}^{(i)}=0$ for all $j\geq 2,a\neq 0$. This follows from Eq. (\ref{eq:xi}).
\end{proof}

To track the commutator in the case of $K^{\otimes n}$ we will need the symplectic form borrowed from the Heisenberg-Weyl group $H^{\otimes n}(p)$. For $v=(v_Z,v_X)$ and $v'=(v'_Z,v'_X)$ in $\zz_p^n\times \zz_p^n$ the symplectic form is defined by
$$
[v,v'] = v_Z\cdot v'_X - v'_Z\cdot v_X \mod p\; ,
$$
where $\cdot$ denotes the inner product on vectors in $\zz^n_p$. Using this symplectic form we can describe the commutator of two elements that commute up to a phase.

\begin{corollary}\label{cor:commutator-Kn}
Let $M=M_1\otimes M_2\otimes \cdots \otimes M_n$ and $M'=M'_1\otimes M'_2\otimes \cdots \otimes M'_n$ be elements in $K^{\otimes n}$ such that $[M,M']=\omega^c\one$. For every $1\leq i\leq n$, write $M_i=(\xi_i,b_i)$, let $\bar b=(b_1,b_2,\cdots,b_n)$ and let $\bar f=(\vartheta_{1,1}^{(1)},\vartheta_{1,1}^{(2)},\cdots, \vartheta_{1,1}^{(n)})$, where $\vartheta_{1,1}^{(i)}$ is the coefficient in Eq.~(\ref{eq: polynomial}) that appears in the expansion of $\xi_i$ in Eq.~(\ref{eq:xi}); similarly for $\bar f'$ and $\bar b'$.
Then we have
$$
[M,M'] = \omega^{[v,v']}\one\; ,
$$ 
where  $v=(\bar f,\bar b)$ and  $v'=(\bar f',\bar b')$.
\end{corollary}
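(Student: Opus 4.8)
The plan is to reduce the $n$-fold statement to the single-tensor-factor case already analysed in Cor.~\ref{cor:commutator-K}, and then to assemble the symplectic form factor by factor. Since $M=\bigotimes_i M_i$ and $M'=\bigotimes_i M'_i$ and tensor products multiply slotwise, the commutator factorises as $[M,M']=\bigotimes_{i=1}^n[M_i,M'_i]$. Each $[M_i,M'_i]$ is diagonal by Eq.~(\ref{eqq}), and a tensor product of diagonal operators is scalar only if every factor is scalar; hence $[M_i,M'_i]=\omega^{c_i}\one$ for some $c_i\in\zz_p$ with $c=\sum_{i=1}^n c_i \bmod p$. It therefore suffices to prove the single-factor identity $c_i=\vartheta^{(i)}_{1,1}b'_i-\vartheta'^{(i)}_{1,1}b_i \bmod p$, since summing over $i$ reproduces exactly $\bar f\cdot\bar b'-\bar f'\cdot\bar b=[v,v']$.

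For a single factor (suppressing the index $i$) I would first dispose of the degenerate case $b=b'=0$, where $c=0$ by Cor.~\ref{cor:commutator-K}(1) and both symplectic entries vanish. Otherwise, say $b\neq 0$; by Cor.~\ref{cor:commutator-K}(2) we may write $M'=S_{\chi'}M^{b'/b}$ with $\chi'(q)=\omega^{\alpha_0'+\alpha_1'q}$ \emph{linear}. The key simplification is to substitute this into the explicit commutator phase $\hat\xi$ of Eq.~(\ref{eqq}). Writing $y=b'/b$ and expanding $M^y$ via Lm.~\ref{lem-for}(1), the phase of $M^y$ is $\prod_{i=0}^{y-1}\xi(q-ib)$; plugging $\xi'=\chi'\cdot\prod_{i=0}^{y-1}\xi(\,\cdot-ib)$ into $\hat\xi(q)=\xi(q)\xi'(q-b)\xi^{-1}(q-b')\xi'^{-1}(q)$, every $\xi$-dependent factor telescopes and cancels, leaving only $\hat\xi(q)=\chi'(q-b)\chi'(q)^{-1}=\omega^{-b\alpha_1'}$. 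This gives the clean, carry-free identity $c=-b\alpha_1'$, consistent with the value recorded in Cor.~\ref{cor:commutator-K}.

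It remains to express the slope $\alpha_1'$ of the linear correction in terms of the level-one linear coefficients $\vartheta_{1,1},\vartheta'_{1,1}$ of the normal form in Eq.~(\ref{eq:xi}). From $\xi'=\chi'\cdot\prod_{i=0}^{y-1}\xi(\,\cdot-ib)$, and the fact that multiplying by the purely level-one gate $\chi'$ just adds $\alpha_1'$ to the $q$-coefficient at level $\tfrac{1}{p}$, one obtains $\vartheta'_{1,1}=\alpha_1'+\tilde\vartheta_{1,1}$, where $\tilde\vartheta_{1,1}$ is the level-one linear coefficient of $M^{y}$. The target identity $c=\vartheta_{1,1}b'-\vartheta'_{1,1}b$ is then equivalent to $\tilde\vartheta_{1,1}=y\,\vartheta_{1,1}\bmod p$, i.e.\ to the assertion that passing to the power $M^{y}$ scales the level-one linear coefficient by $y$.

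The hard part will be exactly this last statement, and it is the place where the polynomial normal form of Thm.~\ref{thm-poly} is indispensable. The product $\prod_{i=0}^{y-1}\xi(q-ib)$ accumulates the finer-level data $f_j$, $j\geq 2$, of $\xi$, and reducing it back to the normal form of Eq.~(\ref{eq:xi}) produces base-$p$ carries from the levels $\tfrac{1}{p^{j}}$ up into level $\tfrac{1}{p}$; moreover the shifts $q\mapsto q-ib$ feed any higher-degree monomials of $f_1$ (Eq.~(\ref{eq: polynomial})) into its linear part. I would use Thm.~\ref{thm-poly} to track both effects and to show that, modulo $p$, the level-one linear coefficient transforms covariantly, $\tilde\vartheta_{1,1}=y\,\vartheta_{1,1}$, so that only the Heisenberg-Weyl data $(\vartheta_{1,1},b)$ survives in $c$. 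The remaining case $b=0\neq b'$ is symmetric (apply Cor.~\ref{cor:commutator-K}(3)), and the tensor sum from the first paragraph then yields $c=[v,v']$.
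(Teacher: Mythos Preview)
Your approach matches the paper's almost line for line: factor the commutator tensorwise, invoke Cor.~\ref{cor:commutator-K} on each slot to get $c_i=-b_i\alpha_1'$, and then convert $\alpha_1'$ into the level-one linear coefficients via $\vartheta'_{1,1}=\alpha_1'+y\,\vartheta_{1,1}$ (equivalently $\tilde\vartheta_{1,1}=y\,\vartheta_{1,1}$ for the phase of $M^y$). The paper asserts this last identity without argument (``It turns out that $\vartheta_{1,1}'^{(i)} = \alpha_1' + \frac{b'}{b} \vartheta_{1,1}^{(i)}$''); you, more carefully, flag it as the ``hard part''.

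The trouble is that this identity is \emph{false} in general, so the step you plan to carry out with Thm.~\ref{thm-poly} cannot succeed. Take $p=3$, $Q=T_{(9)}$, $b=1$, $y=2$, and
\[
\xi(q)=\exp\!\Big[\tfrac{2\pi i}{3}\,q^{2}+\tfrac{2\pi i}{9}\Big],
\]
so that $S_\xi\in T_{(9)}$, $\det S_\xi=1$, and $\vartheta_{1,1}=0$. Then $M^{2}=(\tilde\xi,2)$ with $\tilde\xi=(e^{10\pi i/9},e^{10\pi i/9},e^{16\pi i/9})$, whose unique normal form in the sense of Thm.~\ref{thm-poly} is $\tilde f_1(q)=1+q+2q^2$, $\tilde f_2(q)=2$; hence $\tilde\vartheta_{1,1}=1\neq 0=y\,\vartheta_{1,1}$. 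With $M'=M^2$ one obtains $[M,M']=\one$ yet $[v,v']=\vartheta_{1,1}b'-\vartheta'_{1,1}b=0\cdot 2-1\cdot 1\equiv 2\not\equiv 0$, so the asserted formula $[M,M']=\omega^{[v,v']}\one$ already fails for $n=1$. The mechanism you anticipated is exactly what breaks it: the higher-degree monomial $q^2$ in $f_1$, under the $\zz_p$-periodic shifts $q\mapsto q-ib$, feeds a nonzero contribution into the linear coefficient that does not vanish modulo $p$. (The only downstream use of this corollary---Case~3 in the proof of Thm.~\ref{thm: quasi-local homomorphism in abelian subgroups of order p}---has $b'_i=0$ and $M'_i\in T_{(p)}$, where $\alpha'_1=\vartheta'_{1,1}$ holds trivially and the symplectic identity is correct, so the main theorem survives.)
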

\begin{proof}
Since $[M,M']=\omega^c\one$ we have $[M_i,M_i'] = \omega^{c_i}\one$ such that $\sum_{i=1}^n c_i =c \mod p$.
Cor. \ref{cor:commutator-K} implies that $c_i$ is either zero or given by the formula in parts (2) and (3) of this result.
Suppose we are in case (2). Then $M'_i= S_{\chi'} M_i^{b'/b}$ where $\chi' = \alpha_0'+\alpha_1' q$. Using Lm. \ref{lem-for} (1) we can compute the coefficient $\vartheta_{1,1}'^{(i)}$ in $\xi'$.  It turns out that 
$
\vartheta_{1,1}'^{(i)} = \alpha_1' + \frac{b'}{b} \vartheta_{1,1}^{(i)},
$ 
which gives
$$
[M_i,M'_i]=\omega^{-b\alpha_1'}\one = \omega^{-b(\vartheta_{1,1}'^{(i)}-\frac{b'}{b} \vartheta_{1,1}^{(i)})}\one=\omega^{\vartheta_{1,1}^{(i)}b'-\vartheta_{1,1}'^{(i)}b}\one\; .
$$
Case (3) can be dealt with similarly. Combining $\omega^{c_i}$'s we obtain the formula for the symplectic form.
\end{proof}
 
For each $M=M_1\otimes \cdots\otimes M_n$ we will associate the vector $v=(\bar f,\bar b)$ defined in Cor. \ref{cor:commutator-Kn}. A subspace $W$ in $\zz_d^n\times \zz_d^n$ is called isotropic if $[w,w']=0$ for all $w,w'\in W$. Using this definition we observe that
a subgroup $\Span{M^{(1)},M^{(2)}, \cdots,M^{(n)}}$ of $K^{\otimes n}$ is  abelian if and only if  the subspace $\Span{v^{(1)},v^{(2)},\cdots, v^{(n)} }$ of $\zz_d^{2n}$, where $v^{(i)}=(\bar f^{(i)},b^{(i)})$, is isotropic.

\subsection{Projecting $p$-torsion abelian subgroups into the Heisenberg-Weyl group}
\label{sec: homomorphism of abelian subgroups of order p}

Let $K^{\otimes n}_{(p)}$ denote the subset of $p$-torsion elements in $K^{\otimes n}$.
In this section, we will construct a map
$$
\phi: K^{\otimes n}_{(p)} \to H^{\otimes n}(\Z_p)\; ,
$$
that restricts to a group homomorphism on every $p$-torsion abelian subgroup. 
Cor. \ref{cor:commutator-Kn} hints at a ``linearization map" that could potentially serve for this purpose. The map we need is closely related to linearization, yet needs to be slightly adjusted when acting on $p$-torsion elements. We will revisit this point in Remark.~\ref{rm: almost linearisation} below, after the definition of the map.

\begin{definition}{\label{def phi map}}
For $\xi$ given as in Eq. (\ref{eq:xi}) we introduce two maps:
$$
R(\xi) = \omega^{\vartheta_{1,0}+\vartheta_{2,0} + \vartheta_{1,1}q}\;\;\text{ and }\;\; P(\xi) = \omega^{\vartheta_{1,0} + \vartheta_{1,1}q}\; .
$$
Using this notation we define a map 
$$
\phi: K_{(p)}^{\otimes n} \to H^{\otimes n}(\Z_p)
$$
as follows:
$$
\phi(M_1\otimes M_2 \otimes \cdots \otimes M_n) = \phi_1(M_1)\otimes \phi_1(M_2)\otimes \cdots \otimes \phi_1(M_n)\; ,
$$
where $\phi_1(M)$ for $M=(\xi,b)\in K$ is defined by
$$
\phi_1(M) = 
\left\lbrace
\begin{array}{ll}
(R(\xi),0)  &   \text{if $b=0$}\\
(P(\xi),1)  &   \text{if $b=1$}
\end{array}
\right.\; .
$$
and $\phi_1(M) = \phi_1(M^{b_i^{-1}})^{b_i}$ if $1<b<p-1$.
\end{definition}

\begin{remk}\label{rm: almost linearisation}
Note the additional factor $\omega^{\vartheta_{2,0}}$ in the definition of $\phi_1(M)$ for $M\in Q$.  Recall that for $M^p=\one$,
\[M=(\omega^{\sum_{a=0}^{p-1}\vartheta_{1,a}q^a}\sqrt[p]{\omega}^{\vartheta_{2,0}} ,0 )\; ,\]
by Lm. \ref{lem:torsion-Kn}. It is easy to see that defining $\phi_1$ as a linear projection in both cases, that is for $b=0,1$, would not restrict to a group homomorphism as desired.

To see this, let $p=3$ and consider $S_{\xi_1}\otimes S_{\xi_2}$ where $\xi_1(q)= \sqrt[3]{\omega}$ and $\xi_2(q)=\sqrt[3]{\omega}^2$:
$$
\begin{aligned}
\phi(M_1\otimes M_2)\phi(M_1\otimes M_2) &= (\phi_1(M_1)\otimes \phi_1(M_2)) (\phi_1(M_1)\otimes \phi_1(M_2)) \\
& = \one \otimes \one\; ,
\end{aligned}
$$
which is not the same as
$$
\begin{aligned}
\phi((M_1\otimes M_2)(M_1\otimes M_2)) &= \phi( M_1^2 \otimes M_2^2 ) \\
&= \phi( \sqrt[3]{\omega}^2 \one \otimes \omega \sqrt[3]{\omega} \one ) \\
&= \phi_1(\sqrt[3]{\omega}^2 \one)\otimes \phi_1(\omega \sqrt[3]{\omega} \one) \\
&= \omega (\one \otimes \one)\; .
\end{aligned}
$$
\end{remk}

\begin{lemma}{\label{lem-homoq}}
Let $S_\xi,S_{\xi'}\in Q$ be such that $S_{\xi}^p,S_{\xi'}^p \in Z(K)$. Then we have
$$
\phi(S_\xi S_{\xi'}) = \phi(S_\xi)\phi(S_{\xi'})\; .
$$

\end{lemma}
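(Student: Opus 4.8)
The plan is to reduce the statement to an explicit comparison of the canonical data of $\xi$, $\xi'$ and of their product $\xi\xi'$ under the expansion in Eq.~(\ref{eq:xi}), and then to check multiplicativity of $R$ coefficient by coefficient. First I would use the hypothesis $S_\xi^p,S_{\xi'}^p\in Z(K)=\Span{\omega\one}$ to pin down which coefficients can be nonzero. Raising $\xi$ to the $p$-th power multiplies the level-$j$ exponent by $p$, so $\frac{2\pi i}{p^j}f_j(q)\mapsto \frac{2\pi i}{p^{j-1}}f_j(q)$; the level-$1$ term becomes an integer multiple of $2\pi i$ and disappears, and the requirement that $S_\xi^p$ be a \emph{scalar} power of $\omega$ forces $\vartheta_{j,a}=0$ for all $j\geq 2$ with $a\neq 0$ as well as $\vartheta_{j,0}=0$ for all $j\geq 3$ (this is the same computation that underlies Lm.~\ref{lem:torsion-Kn}). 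Hence $\xi$ is determined by its level-$1$ polynomial $f_1$ together with the single scalar $\vartheta_{2,0}$, with $S_\xi^p=\omega^{\vartheta_{2,0}}\one$, and likewise for $\xi'$. Since both operators have $b=0$, only the diagonal branch $\phi_1(S_\xi)=(R(\xi),0)$ of Def.~\ref{def phi map} is in play, and the target elements multiply pointwise in $H(\zz_p)$; so it suffices to prove $R(\xi\xi')=R(\xi)R(\xi')$ as characters.

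Next I would compute $S_\xi S_{\xi'}=S_{\xi\xi'}$ with $(\xi\xi')(q)=\xi(q)\xi'(q)$ and re-express $\xi\xi'$ in the canonical form of Eq.~(\ref{eq:xi}). Because the two phases simply add inside the exponential, the combined exponents are $f_1+f_1'$ at level $1$ and $\vartheta_{2,0}+\vartheta_{2,0}'$ at level $2$. The subtlety — and the whole content of the lemma — is that these sums need not lie in the canonical range $\{0,\dots,p-1\}$, so restoring canonical form introduces a base-$p$ carry. At level $1$ the reduction modulo $p$ is harmless (an integer multiple of $2\pi i$ is invisible), so $\tilde\vartheta_{1,1}\equiv\vartheta_{1,1}+\vartheta_{1,1}'$; but a surplus in the level-$2$ constant is promoted to a level-$1$ contribution through $\frac{2\pi i}{p^2}\cdot p=\frac{2\pi i}{p}$, feeding a carry into $\tilde\vartheta_{1,0}$ while $\tilde\vartheta_{2,0}\equiv\vartheta_{2,0}+\vartheta_{2,0}'$. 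I would record $\tilde\vartheta_{1,0},\tilde\vartheta_{1,1},\tilde\vartheta_{2,0}$ explicitly in terms of the coefficients of $\xi,\xi'$ and this carry.

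Finally I would substitute into $R(\xi\xi')=\omega^{\tilde\vartheta_{1,0}+\tilde\vartheta_{2,0}+\tilde\vartheta_{1,1}q}$ and compare with $R(\xi)R(\xi')=\omega^{(\vartheta_{1,0}+\vartheta_{2,0})+(\vartheta_{1,0}'+\vartheta_{2,0}')+(\vartheta_{1,1}+\vartheta_{1,1}')q}$, verifying equality of the exponents modulo $p$. The linear part matches immediately. For the constant part the essential point — foreshadowed in Rmk.~\ref{rm: almost linearisation}, where the naive linear projection $P$ (without the $\omega^{\vartheta_{2,0}}$ factor) is shown to fail — is that the extra $\vartheta_{2,0}$ term in $R$ is exactly calibrated to the promotion of the level-$2$ constant to level $1$, so that the two constant exponents agree in $\zz_p$. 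Combining the character equality with pointwise multiplication in $H(\zz_p)$ then yields $\phi(S_\xi S_{\xi'})=(R(\xi\xi'),0)=(R(\xi),0)(R(\xi'),0)=\phi(S_\xi)\phi(S_{\xi'})$. I expect the main obstacle to be precisely this carry bookkeeping: tracking the single level-$2$-to-level-$1$ carry and confirming that the definition of $R$ absorbs it cleanly; once that is settled, the remaining manipulations are routine.
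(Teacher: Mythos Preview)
Your overall plan is sound and matches the paper's line of attack: pin down the explicit form of $\xi,\xi'$ using the hypothesis, then compare $R(\xi\xi')$ with $R(\xi)R(\xi')$. But your key claim --- that the carry from level~$2$ to level~$1$ is ``absorbed'' by the extra $\vartheta_{2,0}$ term in $R$ --- does not hold if you read Def.~\ref{def phi map} literally as $R(\xi)=\omega^{\vartheta_{1,0}+\vartheta_{2,0}+\vartheta_{1,1}q}$. Take $p=3$ and $\xi=\xi'$ with $\vartheta_{1,0}=0$, $\vartheta_{2,0}=2$. Then $\xi\xi'=\sqrt[3]{\omega}^{\,4}=\omega\sqrt[3]{\omega}$ has canonical data $\tilde\vartheta_{1,0}=1$, $\tilde\vartheta_{2,0}=1$, giving $R(\xi\xi')=\omega^{2}$, while $R(\xi)R(\xi')=\omega^{2}\cdot\omega^{2}=\omega$. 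The carry adds $+1$ to $\tilde\vartheta_{1,0}$ but subtracts $p$ (not $1$) from $\tilde\vartheta_{2,0}$, so $\tilde\vartheta_{1,0}+\tilde\vartheta_{2,0}$ shifts by $1-p\equiv 1\pmod p$, not by~$0$. The verification you promise in your last paragraph would therefore fail.

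The paper's proof never reduces $\xi\xi'$ to canonical form. It simply writes
\[
\xi\xi'(q)=\omega^{\sum_a(\vartheta_{1,a}+\vartheta'_{1,a})q^a}\,\sqrt[p]{\omega}^{\,\vartheta_{2,0}+\vartheta'_{2,0}}
\]
and reads off $\phi(S_\xi S_{\xi'})(q)=\omega^{\vartheta_{1,0}+\vartheta'_{1,0}+(\vartheta_{1,1}+\vartheta'_{1,1})q}\sqrt[p]{\omega}^{\,\vartheta_{2,0}+\vartheta'_{2,0}}$, keeping the $p^{2}$-th root intact. Multiplicativity is then immediate and no carry ever appears. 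Note this is inconsistent with the stated formula for $R$ in Def.~\ref{def phi map}; the proof is effectively using $R(\xi)=\omega^{\vartheta_{1,0}+\vartheta_{1,1}q}\sqrt[p]{\omega}^{\,\vartheta_{2,0}}$. Under that reading the exponent data add directly and the lemma is a one-line check; under the literal reading of Def.~\ref{def phi map}, your carry bookkeeping is the right instinct but the identity you are trying to verify is actually false on single tensor factors. Either way, the step you correctly flagged as the main obstacle is exactly where the argument turns.
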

\begin{proof}
The condition $S_{\xi}^p,S_{\xi'}^p \in Z(K)$ implies that $\vartheta_{j,a}=0$ for $j\geq 2, a\neq 0$; similarly for $\vartheta_{j,a}'$ (according to the proof of Lm. \ref{lem:torsion-Kn}).  In other words, we have
\begin{align*}
\xi(q)&=\omega^{\sum_{a=0}^{p-1}\vartheta_{1,a}q^a}\sqrt[p]{\omega}^{\vartheta_{2,0}}\\
\xi'(q)&=\omega^{\sum_{a=0}^{p-1}\vartheta'_{1,a}q^a}\sqrt[p]{\omega}^{\vartheta'_{2,0}}\\
\xi(q)\xi'(q)&=\omega^{\sum_{a=0}^{p-1}(\vartheta_{1,a}+\vartheta_{1,a})q^a}\sqrt[p]{\omega}^{\vartheta_{2,0}+\vartheta'_{2,0}}.\\
\end{align*}
By definition of $\phi$, we have
\[\phi(S_\xi S_{\xi'})(q)=\omega^{\vartheta_{1,0}+\vartheta'_{1,0}+(\vartheta_{1,1}+\vartheta'_{1,1})q}\sqrt[p]{\omega}^{\vartheta_{2,0}+\vartheta'_{2,0}}\; ,\]
which is the same as
\[\phi(S_\xi)(q)\phi(S_{\xi'})(q)=\omega^{\vartheta_{1,0}+\vartheta_{1,1}q}\sqrt[p]{\omega}^{\vartheta_{2,0}}\omega^{\vartheta'_{1,0}+\vartheta'_{1,1}q}\sqrt[p]{\omega}^{\vartheta'_{2,0}}=\omega^{\vartheta_{1,0}+\vartheta'_{1,0}+(\vartheta_{1,1}+\vartheta'_{1,1})q}\sqrt[p]{\omega}^{\vartheta_{2,0}+\vartheta'_{2,0}}\; .\]
Thus $\phi(S_\xi S_{\xi'}) = \phi(S_\xi)\phi(S_{\xi'})$.
\end{proof}

\begin{lemma}{\label{lem-cen}}
Let $M\in K$ such that $M\not\in Q$ and $S_{\chi}\in\hat{\zz}_p$. Then we have
$$
\phi(S_{\chi}M)=\phi(S_{\chi})\phi(M)=S_{\chi}\phi(M)\; .
$$
 
\end{lemma}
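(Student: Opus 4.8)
The plan is to verify the two asserted equalities in turn: the second one, $\phi(S_\chi)=S_\chi$, is immediate, and the first one, $\phi(S_\chi M)=S_\chi\phi(M)$, will be reduced to a single clean identity about linear phase functions. Throughout I use the notation $(\xi,b)=S_\xi X^b$, the group law $(\xi,b)(\xi',b')=(\xi\,b\cdot\xi',b+b')$, and identify $\phi$ with $\phi_1$ on the single factor $K$.

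First I would establish $\phi(S_\chi)=S_\chi$. Since $S_\chi\in T_{(p)}$, the exponent of $\chi$ is linear, $\chi(q)=\omega^{\alpha_0+\alpha_1 q}$, so in the notation of Eq.~(\ref{eq:xi})--(\ref{eq: polynomial}) its only nonzero coefficients are $\vartheta_{1,0}=\alpha_0$ and $\vartheta_{1,1}=\alpha_1$; in particular $\vartheta_{2,0}=0$. Hence $R(\chi)=\omega^{\alpha_0+\alpha_1 q}=\chi$ and, by Def.~\ref{def phi map}, $\phi(S_\chi)=(R(\chi),0)=S_\chi$. This settles the second equality and reduces the statement to $\phi(S_\chi M)=S_\chi\phi(M)$.

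For the first equality, write $M=(\xi,b)$; the hypothesis $M\notin Q$ forces $b\neq 0$, and by Lm.~\ref{lem-torsion-K} both $M$ and $S_\chi M=(\chi\xi,b)$ are $p$-torsion, so $\phi$ is defined on them. The key structural observation is that, because $\chi$ is linear, multiplying any $\xi$ by $\chi$ shifts only the level-one coefficients $\vartheta_{1,0},\vartheta_{1,1}$ by $\alpha_0,\alpha_1$ and leaves every other coefficient untouched; consequently $P(\chi\xi)=\chi\,P(\xi)$. In the base case $b=1$ the claim is then immediate: $\phi(S_\chi M)=(P(\chi\xi),1)=(\chi P(\xi),1)=(\chi,0)(P(\xi),1)=S_\chi\phi(M)$. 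For general $b\neq 0$ I would invoke the recursive clause $\phi(N)=\phi(N^{b^{-1}})^b$ of Def.~\ref{def phi map}. By Lm.~\ref{lem-for}(1), raising $S_\chi M=(\chi\xi,b)$ to the power $b^{-1}$ gives $(\tilde\chi\,\eta,1)=S_{\tilde\chi}M^{b^{-1}}$, where $\eta$ is the phase of $M^{b^{-1}}=(\eta,1)$ and $\tilde\chi=\prod_{i=0}^{b^{-1}-1}(ib)\cdot\chi$ is again linear (shifts of a linear function are linear, and products add exponents). Applying the base case to $S_{\tilde\chi}M^{b^{-1}}$ and then raising to the power $b$ reduces the whole claim to the single identity $\prod_{i=0}^{b-1}(i)\cdot\tilde\chi=\chi$. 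The cleanest way to obtain this is to note that $(\chi,b)^{b^{-1}}=(\tilde\chi,1)$ by Lm.~\ref{lem-for}(1), hence $(\tilde\chi,1)^{b}=(\chi,b)^{b\cdot b^{-1}}=(\chi,b)$, using that $(\chi,b)$ is $p$-torsion of order $p$ (Lm.~\ref{lem-torsion-K}) and $b\cdot b^{-1}\equiv 1\bmod p$; comparing phase-components via Lm.~\ref{lem-for}(1) yields exactly $\prod_{i=0}^{b-1}(i)\cdot\tilde\chi=\chi$.

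The main obstacle is precisely this general-$b$ bookkeeping: a priori the iterated shift-and-product operations hidden in the recursion could return $\chi$ only up to an extra phase, and one must check that the phase cancels. The order-$p$ argument above avoids a direct arithmetic-progression computation, which would otherwise hinge on $2$ being invertible mod $p$ and on the collapse $b^2 b^{-1}(b^{-1}-1)+b^{-1}b(b-1)\equiv 0\bmod p$; it is here that the linearity of $\chi\in T_{(p)}$ and the torsion structure of $K$ do the essential work.
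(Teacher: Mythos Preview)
Your proof is correct and follows essentially the same approach as the paper. Both arguments hinge on the identity $P(\bar\chi\,\bar\xi)=\bar\chi\,P(\bar\xi)$ for $\bar\chi\in T_{(p)}$ and on the key relation $\chi=\prod_{j=0}^{b-1}j\cdot\bar\chi$ (your $\tilde\chi$ is the paper's $\bar\chi$), which in both cases is obtained from $(\chi,b)=((\chi,b)^{b^{-1}})^b$ using the $p$-torsion of $(\chi,b)$; the only difference is that you separate out the $b=1$ base case and then reduce to it, whereas the paper carries the general $b$ through a single computation.
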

\begin{proof}
We denote $M=(\xi,b)$ and $S_{\chi}=(\chi,0)$. 
Define $\bar{\chi}=\prod_{i=0}^{y-1}(ib)\cdot \chi$ and $\bar{\xi}=\prod_{i=0}^{y-1}(ib)\cdot\xi$, where $y\in\Z_p$ such that $yb=1$. 
Consider the element $(\chi,b)$. By Lm. \ref{lem-for}, we have
\[(\chi,b)=((\chi,b)^y)^b=\left(\prod_{i=0}^{y-1}(ib)\cdot\chi,1\right)^b=\left(\prod_{j=0}^{b-1}j\cdot\bar{\chi},b\right)
\implies\chi=\prod_{j=0}^{b-1}j\cdot\bar{\chi}\; .\]
Next, consider the element $(\xi,b)$, we have
\begin{align*}
&(\xi,b)=((\xi,b)^y)^b=\left(\prod_{i=0}^{y-1}(ib)\cdot\xi,1\right)^b=(\bar{\xi},1)^b\\
\implies & \phi(\xi,b)=(P(\bar{\xi}),1)^b=\left(\prod_{j=0}^{b-1}j\cdot P(\bar{\xi}),b\right)\; .
\end{align*}
For the element $S_{\chi}M$, we have
\[(\chi,0)(\xi,b)=((\chi\xi,b)^y)^b=\left(\left(\prod_{i=0}^{y-1}(ib)\cdot\chi\right)\left(\prod_{i=0}^{y-1}(ib)\cdot\xi\right),1\right)^b=(\bar{\chi}\bar{\xi},1)^b.\]
Notice that since $S_{\bar{\chi}}\in\hat{\zz}_p$, we have $P(\bar{\chi}\bar{\xi})=\bar{\chi}P(\bar{\xi})$. Therefore
\begin{align*}
	\phi((\chi,0)(\xi,b))
	=(P(\bar{\chi}\bar{\xi}),1)^b
	&=(\bar{\chi}P(\bar{\xi}),1)^b \\
	&=\left(\left(\prod_{j=0}^{b-1}j\cdot\bar{\chi}\right)\left(\prod_{j=0}^{b-1}j\cdot P(\bar{\xi})\right),b\right)
	=(\chi,0)\phi(\xi,b)
	=\phi(\chi,0)\phi(\xi,b)\; .\qedhere
\end{align*}
\end{proof}

\begin{lemma}{\label{lem}}
Let $M=(\xi,b)$ and $M'=(\xi',b')$ be elements of $K$ such that $[M,M']=\omega^c \one$ for some $c\in\Z_p$ and $b,b'\neq 0$. Then we have 
$$\phi(MM')=\phi(M)\phi(M')\; .$$
\end{lemma}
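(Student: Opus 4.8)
The plan is to reduce the general case $b,b' \neq 0$ to the already-established multiplicativity lemmas by using the fact that $\phi_1$ is defined through the normalization $\phi_1(M) = \phi_1(M^{b^{-1}})^b$, so it suffices to understand the product in terms of the shift-component-$1$ representatives. First I would invoke Cor.~\ref{cor:commutator-K}: since $b,b' \neq 0$ and $[M,M']=\omega^c\one$, we may write $M' = S_{\chi'} M^{b'/b}$ for some $S_{\chi'} \in T_{(p)}$ with $\chi'(q)=\omega^{\alpha_0'+\alpha_1'q}$ linear. This is the key structural input: it says that two elements of $K$ with nonzero shift that commute up to a phase must be related by a central (linear-phase) twist and a rational power, exactly the regime where $\phi$ was designed to behave well.

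The strategy then is to compute $\phi(MM')$ and $\phi(M)\phi(M')$ separately and match them. Using $M' = S_{\chi'}M^{b'/b}$, the product $MM' = M S_{\chi'} M^{b'/b}$; I would move $S_{\chi'}$ through using the conjugation action (Lm.~\ref{lem-for}(2)), which only shifts the linear phase $\chi'$ by a translation, keeping it in $T_{(p)}$. This rewrites $MM'$ in the form $S_{\chi''}M^{(b+b')/b}$ for an explicit linear $\chi''$. Now both $MM'$ and $M'$ are of the shape (linear central twist)$\times$(power of $M$), which is precisely the situation handled by Lm.~\ref{lem-cen}: that lemma tells us $\phi(S_\chi N) = S_\chi \phi(N)$ whenever $\chi$ is linear and $N \notin Q$. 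So I would factor out the central linear parts on both sides via Lm.~\ref{lem-cen}, reducing the identity $\phi(MM')=\phi(M)\phi(M')$ to a statement purely about powers of $M$, namely that $\phi(M^s M^t) = \phi(M^s)\phi(M^t)$ for the relevant rational exponents, which in turn follows from the definition $\phi_1(M)=\phi_1(M^{b^{-1}})^b$ together with Lm.~\ref{lem-for}(1),(3) governing how powers of $(\xi,b)$ expand.

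The remaining piece is to check that the linear phases $\chi', \chi''$ produced along the way are genuinely in $T_{(p)}$ (i.e.\ $p$-torsion up to center) so that Lm.~\ref{lem-cen} applies, and that the $\vartheta_{2,0}$-correction built into $R(\xi)$ does not interfere here; but since $M,M' \notin Q$ have $b,b'\neq 0$, the $\vartheta_{2,0}$ terms play no role (they only enter the $b=0$ branch of $\phi_1$), so the computation stays within the $P$-type projection. I expect the main obstacle to be bookkeeping the translation action on $\chi'$ when commuting $S_{\chi'}$ past powers of $M$: one must verify that the accumulated phase coefficient matches on both sides, i.e.\ that the constant and linear coefficients $\alpha_0', \alpha_1'$ transform consistently under the $b \cdot \xi(q) = \xi(q-b)$ action so that $\phi$ of the reassembled product agrees coefficient-by-coefficient with $\phi(M)\phi(M')$. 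This is routine given Lm.~\ref{lem-for}, but it is where all the $\zz_p$-arithmetic actually lives and where a sign or index error would hide.
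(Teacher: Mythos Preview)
Your approach is essentially the paper's: both write one of $M,M'$ as a $T_{(p)}$-twist of a power of the other (the paper expresses $M=\omega^a(M'S_\chi)^y$ via Lm.~\ref{lm: commutation relations gen HW group up to phase}(3), you express $M'=S_{\chi'}M^{b'/b}$ via Cor.~\ref{cor:commutator-K}, which is symmetric), then invoke Lm.~\ref{lem-cen} and Lm.~\ref{lem-for} to peel off the linear phase and reduce everything to powers of a single shift-$1$ element, where $\phi_1$ is multiplicative by definition.

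One point you gloss over that the paper treats explicitly: when $b+b'\equiv 0\pmod p$ the product $MM'$ lands in $Q$, so Lm.~\ref{lem-cen} does not apply to $\phi(MM')$, and your claim that ``the computation stays within the $P$-type projection'' is literally false there --- you hit the $R$-branch of $\phi_1$. The paper handles this as a separate case. It is harmless, because in that case $MM'=S_{b\cdot\chi'}\in T_{(p)}$ is a genuinely linear phase with $\vartheta_{2,0}=0$, so $R$ acts as the identity; but you should flag and dispatch it rather than assert it away.
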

\begin{proof}
We have $M'=(\xi',b')=(\xi_0',1)^{b'}$ where $\xi'_0=\prod_{i=0}^{z-1}(ib')\cdot\xi'$ and $z\in\Z_p$ such that $zb'=1$. Thus we have 
\[\phi(M')=(P(\xi_0'),1)^{b'}.\]
By Lm. \ref{lm: commutation relations gen HW group up to phase}(3), there exists $a\in\Z_p$ and $y\in \Z_p$ where $yb'=b$ and $S_\chi=(\chi,0)\in\hat{\zz}_p$ such that $M=S_{\omega^a}(M'S_\chi)^y$. Thus we have 
\begin{align*}
M=\omega^a((\xi',b')(\chi,0))^y&=\omega^a\left(\prod_{j=1}^y(jb')\cdot\chi,0\right)(\xi',b')^y\,\,\,\,\,\text{(apply Lm. \ref{lem-for}(3)})\\
&=\omega^a\left(\prod_{j=1}^y(jb')\cdot\chi,0\right)((\xi'_0,1)^{b'})^y\\
&=\omega^a\left(\prod_{j=1}^y(jb')\cdot\chi,0\right)(\xi'_0,1)^b\; .
\end{align*}
By Lm. \ref{lem-cen},  we have 
\[\phi(M)=\omega^a\left(\prod_{j=1}^y(jb')\cdot\chi,0\right)\phi(({\xi'}_0,1)^b)=\omega^a\left(\prod_{j=1}^y(jb')\cdot\chi,0\right)(P(\xi'_0),1)^b\; .\]
Next, consider the product of $M$ and $M'$. We have
\[MM'=\omega^a\left(\prod_{j=1}^y(jb')\cdot\chi,0\right)(\xi'_0,1)^b(\xi_0',1)^{b'}
=\omega^a\left(\prod_{j=1}^y(jb')\cdot\chi,0\right)(\xi'_0,1)^{b+b'}.\]
We consider two cases. First, we assume $b+b'$ is a multiple of $p$. In this case, we have $(\xi'_0,1)^{b+b'}=(P(\xi'_0),1)^{b+b'}=1$. Thus $MM'\in T_{(p)}$. Hence
\[\phi(MM')=\phi\left(\omega^a\left(\prod_{j=1}^y(jb')\cdot\chi,0\right)\right)=\omega^a\left(\prod_{j=1}^y(jb')\cdot\chi,0\right)\; .\]
Therefore we have
\[\phi(M)\phi(M')=\omega^a\left(\prod_{j=1}^y(jb')\cdot\chi,0\right)(P(\xi'_0),1)^{b+b'}=\omega^a\left(\prod_{j=1}^y(jb')\cdot\chi,0\right)=\phi(MM')\; .\]
Second, we assume that $b+b'$ is not a multiple of $p$. By Lm.~\ref{lem-cen}, we then have
\begin{align*}
\phi(MM')=\omega^a\left(\prod_{j=1}^y(jb')\cdot\chi,0\right)\phi((\xi'_0,1)^{b+b'})&=\omega^a\left(\prod_{j=1}^y(jb')\cdot\chi,0\right)(P({\xi'}_0),1)^b(P({\xi'}_0),1)^{b'}\\
&=\phi(M)\phi(M')\; .\qedhere
\end{align*}
\end{proof}

\begin{theorem}\label{thm: quasi-local homomorphism in abelian subgroups of order p}
The map $\phi: K^{\otimes n}_{(p)} \to H^{\otimes n}(\Z_p)$ restricts to a group homomorphism on $p$-torsion abelian subgroups. That is, 
$$
\phi(MM') = \phi(M)\phi(M')\; ,
$$
for all $M,M'\in K^{\otimes n }_{(p)}$ such that $[M,M']=\one$.
\end{theorem}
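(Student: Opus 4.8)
The plan is to reduce the tensor-product statement to the three single-tensor-factor lemmas already established (Lm.~\ref{lem-homoq}, Lm.~\ref{lem-cen}, Lm.~\ref{lem}), exploiting that $\phi$ acts factor-wise. Writing $M=M_1\otimes\cdots\otimes M_n$ and $M'=M'_1\otimes\cdots\otimes M'_n$, the identity $(A_1\otimes\cdots\otimes A_n)(B_1\otimes\cdots\otimes B_n)=(A_1B_1)\otimes\cdots\otimes(A_nB_n)$ gives $\phi(MM')=\bigotimes_{i=1}^n\phi_1(M_iM'_i)$ and $\phi(M)\phi(M')=\bigotimes_{i=1}^n\phi_1(M_i)\phi_1(M'_i)$. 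Hence it suffices to prove the single-factor identity $\phi_1(M_iM'_i)=\phi_1(M_i)\phi_1(M'_i)$ for each $i$.

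First I would extract the correct local hypothesis from the global one. Since $[M,M']=\bigotimes_i[M_i,M'_i]=\one$, each factor commutator must be a scalar; write $[M_i,M'_i]=\omega^{c_i}\one$ (with $\sum_ic_i=0\bmod p$, a relation that will not itself be needed). This is the crucial consequence of exact commutation: it places each pair $(M_i,M'_i)$ into the ``commute up to a phase'' regime required by the lemmas. I would also record two observations: $\phi_1$ is linear on central scalars, $\phi_1(\omega^{c}N)=\omega^{c}\phi_1(N)$, and $\phi$ transports commutators faithfully, $[\phi_1(M_i),\phi_1(M'_i)]=\omega^{c_i}\one$. The latter is immediate from Cor.~\ref{cor:commutator-Kn}, since $\phi_1$ leaves the symplectic data $(\bar f,\bar b)$ unchanged and the same symplectic form computes the commutator in $H^{\otimes n}(\zz_p)$.

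Then I would split into three cases according to $M_i=(\xi_i,b_i)$, $M'_i=(\xi'_i,b'_i)$. If $b_i=b'_i=0$, both factors are diagonal, and since $M,M'$ are $p$-torsion their diagonal factors have central $p$-th powers, $M_i^p,M_i'^{\,p}\in Z(K)$ (Lm.~\ref{lem:torsion-Kn}); Lm.~\ref{lem-homoq} then applies. If $b_i,b'_i\neq0$, both factors are non-diagonal, hence $p$-torsion by Lm.~\ref{lem-torsion-K}, and Lm.~\ref{lem} gives the identity directly (that lemma already permits $c_i\neq0$). In the mixed case, say $b_i\neq0$ and $b'_i=0$, commutation up to a phase with the non-diagonal $M_i$ forces $M'_i\in T_{(p)}$ by Lm.~\ref{lm: commutation relations gen HW group up to phase}(2); then Lm.~\ref{lem-cen} yields $\phi_1(M'_iM_i)=\phi_1(M'_i)\phi_1(M_i)$, and I would convert this into the desired order via $M_iM'_i=\omega^{c_i}M'_iM_i$ together with the commutator-preservation observation, so that the two factors of $\omega^{c_i}$ cancel. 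The symmetric subcase $b'_i\neq0$, $b_i=0$ is identical. Tensoring the factor-wise identities then yields $\phi(MM')=\phi(M)\phi(M')$.

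The main obstacle is the bookkeeping of the local phases $c_i$. Because the commuting hypothesis is only global, individual factors commute merely up to a possibly nontrivial scalar, and individual tensor factors of a $p$-torsion element need not themselves be $p$-torsion (their $p$-th powers are only central). These two facts are precisely why Lm.~\ref{lem-homoq} is phrased with the weaker hypothesis $S_\xi^p\in Z(K)$ and why Lm.~\ref{lem} is proved for pairs commuting up to a phase. The mixed case is the most delicate, as it is the only one in which I must actively cancel the phase $\omega^{c_i}$ by reordering, relying on the faithful transport of the commutator into $H^{\otimes n}(\zz_p)$.
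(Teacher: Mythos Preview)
Your proposal is correct and follows essentially the same route as the paper: reduce to single tensor factors via the factorwise definition of $\phi$, use Lm.~\ref{lem:torsion-Kn} and Cor.~\ref{cor:commutator-Kn} to extract the local hypotheses (commutation up to a phase, and $p$-th powers central for the diagonal factors), and then dispatch the cases with Lm.~\ref{lem-homoq}, Lm.~\ref{lem-cen}, and Lm.~\ref{lem}, handling the mixed case by the same phase-cancellation argument via $[\phi_1(M_i),\phi_1(M'_i)]=[M_i,M'_i]$. The only cosmetic difference is that the paper lists the two orderings of the mixed case separately (four cases rather than your three), and you are slightly more explicit in invoking Lm.~\ref{lm: commutation relations gen HW group up to phase}(2) to force the diagonal factor into $T_{(p)}$ before applying Lm.~\ref{lem-cen}.
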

\begin{proof}
Let $M=M_1\otimes \cdots\otimes M_n$ and $M'=M_1'\otimes\cdots\otimes M'_n$ be elements of $K^{\otimes n}_{(p)}$. Since $M^p=(M')^p=\one$ and $[M,M']=\one$ both Lm. \ref{lem:torsion-Kn} and Cor. \ref{cor:commutator-Kn} apply. In particular, we can reduce to the case $n=1$ according to Def.~\ref{def phi map}.
Consequently, there are four cases to consider:
\begin{itemize}
\item Case 1: $M_i,M'_i\in Q$. Follows from Lm. \ref{lem-homoq}.
\item Case 2: $M_i\in T_{(p)}$ and $M_i\not\in Q$. Follows from Lm. \ref{lem-cen}.
\item Case 3: $M_i\not\in Q$ and $M_i'\in T_{(p)}$. Follows from 
$$
\begin{aligned}
\phi(MM') & = \phi(\omega^\alpha M'M)\\
&= \omega^\alpha \phi(M')\phi(M)\\
&=\omega^\alpha \omega^{-\alpha} \phi(M)\phi(M')\\
&= \phi(M)\phi(M')\; ,
\end{aligned}
$$
where in the first line $\omega^\alpha=[M,M']$, in the second line  we used Lm. \ref{lem-cen} and in the third line we used $[M,M']=[\phi(M),\phi(M')]$ which is a consequence of Cor. \ref{cor:commutator-Kn}.
\item Case 4: $M_i$ and $M_i'$ are not in $Q$. Follows from  Lm. \ref{lem}.\qedhere
\end{itemize}
\end{proof}

\section{Solutions to LCS in $K_Q^{\otimes n}(p)$ are classical}\label{sec: from noncommutative to commutative LCS}

For convenience, we recall the definition of a solution group associated to a LCS from Def. 1 in \cite{QassimWallman2020}:

\begin{definition}\label{def: solution group}
	For a LCS `$Ax=b \mod d$' with $A \in M_{m, n}(\zz_d)$ and $b \in \zz^m_d$, the solution group $\Gamma(A,b)$ is the finitely presented group generated by the symbols ${J, g_1, \cdots, g_n}$ and the relations
	\begin{itemize}
		\item[(a)] $g^d_i = e$, $J^d = e$   (`$d$-torsion'),
		\item[(b)] $Jg_iJ^{-1}g_i^{-1}=e$ for all $i\in\{1,\cdots,n\}$ and $g_jg_kg_j^{-1}g_k^{-1} = e$ whenever there exists a row i such that $L_{ij} \neq 0$ and $L_{ik} \neq 0$,   (`commutativity')
		\item[(c)] $\prod_{j=1}^N g^{A_{ij}}_j = J^{b_i}$ for all i $\{1, \cdots ,m\}$   (`constraint satisfaction').
	\end{itemize}
\end{definition}

We are interested in solutions to LCS in $K^{\otimes n}_Q(p)$ for $p$ odd prime, \rev{the group generated by Pauli and diagonal Clifford operators} (see Def. \ref{def: local MBQC-group}). Surprisingly, we find that LCS admitting a solution $\Gamma \ra K^{\otimes n}_Q(p)$ are classical. To prove this, we will reduce a solution $\Gamma \ra K^{\otimes n}_Q(p)$ to a solution $\Gamma \ra \mH^{\otimes n}(\zz_p)$ using the map $\phi$ from the previous section.

\begin{theorem}\label{thm: main result - Clifford hierarchy}
    Let $\Gamma$ be the solution group of a LCS over $\zz_p$ where $p$ is an odd prime. Then the LCS admits a quantum solution $\eta: \Gamma \ra K^{\otimes n}_Q(p)$ with $T_{(p)}\subset Q\subset T_{(p^m)}$ if and only if it admits a classical solution $\eta: \Gamma \ra \zz_p$.
\end{theorem}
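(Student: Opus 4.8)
The plan is to establish the two implications separately, with the reverse direction (classical $\Rightarrow$ quantum) being essentially immediate and the forward direction carrying the real content, namely transporting the quantum solution into the odd-prime Heisenberg–Weyl group via the map $\phi$ and then invoking the result of \cite{QassimWallman2020}. For the reverse direction, given a classical solution $\eta:\Gamma\to\zz_p$, I would compose it with the embedding of $\zz_p$ as the center $Z(K^{\otimes n}_Q(p))=\Span{\omega\one}$, sending $1\mapsto\omega\one$. Every generator then maps to a central, $p$-torsion scalar, so the torsion and commutativity relations (a), (b) of Def.~\ref{def: solution group} hold automatically, while constraint satisfaction (c) is inherited directly from the additive identity $A\tilde x=b\bmod p$ solved by $\eta$. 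This produces a quantum solution valued in $K^{\otimes n}_Q(p)$.

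For the converse, I start from a quantum solution $\eta:\Gamma\to K^{\otimes n}_Q(p)$, recalling that a quantum solution sends the central generator $J$ to the scalar $\omega\one$. I would then define a new assignment $\psi$ on the generators of $\Gamma$ by $\psi(g_i):=\phi(\eta(g_i))$ and $\psi(J):=\omega\one$, where $\phi$ is the map of Thm.~\ref{thm: quasi-local homomorphism in abelian subgroups of order p}. The goal is to check that $\psi$ respects the defining relations of $\Gamma$, so that by the universal property of finitely presented groups it extends to a homomorphism $\psi:\Gamma\to H^{\otimes n}(\zz_p)$, i.e. a quantum solution of the \emph{same} LCS valued in the odd-prime Pauli group $\mc{P}_p^{\otimes n}\cong H^{\otimes n}(\zz_p)$. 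Torsion is immediate since $H^{\otimes n}(\zz_p)$ has exponent $p$. Commutativity within a row follows because $\phi$ preserves the commutator, $[\phi(M),\phi(M')]=[M,M']$ (Cor.~\ref{cor:commutator-Kn}), so that commuting images $\eta(g_j),\eta(g_k)$ are sent to commuting elements; and $\psi(J)=\omega\one$ is central.

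The crucial relation is constraint satisfaction. Here I would use that, for each row $l$, the operators $\set{\eta(g_j)\suchthat A_{lj}\neq 0}$ commute and are $p$-torsion, hence generate a $p$-torsion abelian subgroup on which $\phi$ is a genuine homomorphism by Thm.~\ref{thm: quasi-local homomorphism in abelian subgroups of order p}. This lets me pull $\phi$ through the product,
$$\prod_j \psi(g_j)^{A_{lj}} = \phi\Big(\prod_j \eta(g_j)^{A_{lj}}\Big) = \phi\big(\eta(J^{b_l})\big) = \phi(\omega^{b_l}\one) = \omega^{b_l}\one = \psi(J)^{b_l}\; ,$$
where the penultimate equality uses that $\phi$ fixes the center, a direct check from Def.~\ref{def phi map} since $R$ sends the constant phase $\omega$ to $\omega$. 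Having produced a quantum solution $\psi:\Gamma\to H^{\otimes n}(\zz_p)$, I would invoke \cite{QassimWallman2020}, which shows that any LCS admitting a solution in the generalised Pauli group of odd prime dimension also admits a classical solution $\Gamma\to\zz_p$. This closes the forward implication and completes the equivalence.

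The main obstacle is entirely front-loaded into the construction and properties of $\phi$, namely Thm.~\ref{thm: quasi-local homomorphism in abelian subgroups of order p} and Cor.~\ref{cor:commutator-Kn}; granting those, the only delicate point remaining in this proof is that $\phi$ is merely a \emph{partial} homomorphism, so $\psi$ must be defined on generators and verified relation-by-relation rather than as an honest composite $\phi\circ\eta$. What makes this work is precisely that the abelian-subgroup homomorphism property of $\phi$ applies to each constraint row, where the relevant generators are forced by relation (b) to commute — so the single place where $\phi$ could fail to be multiplicative never occurs within a row.
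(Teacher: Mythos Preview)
Your proposal is correct and follows essentially the same route as the paper: embed a classical solution into the center for the easy direction, and for the hard direction compose $\eta$ with $\phi$, verify that the three families of relations (torsion, commutativity, constraint satisfaction) survive using Thm.~\ref{thm: quasi-local homomorphism in abelian subgroups of order p}, and then invoke \cite{QassimWallman2020}. The only cosmetic differences are that the paper writes the composite as $\phi\circ\eta$ directly while you (more scrupulously) define $\psi$ on generators and check relations, and that you appeal to Cor.~\ref{cor:commutator-Kn} for commutativity whereas the paper derives it from Thm.~\ref{thm: quasi-local homomorphism in abelian subgroups of order p} itself.
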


\begin{proof}
    A classical solution $x \in \zz^n_p$ becomes a quantum solution under the identification $M_k = \omega^{x_k}\mathbbm{1} \in Z(K^{\otimes n}_Q(p))$ for all $k\in\{1,\cdots,n\}$. For the converse, assume that $\eta: \Gamma \ra K^{\otimes n}_Q(p)$ is a quantum solution of the LCS. We show that $\phi \circ \eta: \Gamma \ra \mH^{\otimes n}(\zz_p)$ defines a solution of the LCS, i.e., that the map $\phi: K^{\otimes n}_Q(p)_{(p)} \ra \mH^{\otimes n}(\zz_p)$  preserves the constraints in $\Gamma$. This follows directly form Thm. \ref{thm: quasi-local homomorphism in abelian subgroups of order p} more precisely,
    \begin{itemize}
        \item[(i)] $d$-torsion: clearly, every ($n$-qudit Pauli) operator $P \in \mH^{\otimes n}(\zz_p)$ has order $p$ (for $p$ odd prime).
        
        In fact, $\phi$ preserves $d$-torsion  by Thm. \ref{thm: quasi-local homomorphism in abelian subgroups of order p}: $\phi(M)^p = \phi(M^p) = \phi(\one) = \one$ for all $M\in K^{\otimes n}_Q(p)_{(p)}$.
        
        \item[(ii)] commutativity: $\phi$ preserves commutativity by Thm. \ref{thm: quasi-local homomorphism in abelian subgroups of order p}.
        
        \item[(iii)] constraint satisfaction: let $\{M_j\}_{j \in J}$, $M_j \in K^{\otimes n}_Q(p)$ be a set of pairwise commuting operators such that $M_j^p=\one$ for all $j \in J$ and $\omega^{b_J}\one = \prod_{j \in J} M_j$. It follows that $\{M_j\}_{j \in J}$ generates an abelian subgroup of order $p$, hence, $\phi$ preserves constraints by Thm. \ref{thm: quasi-local homomorphism in abelian subgroups of order p}:
\[\omega^{b_J}\one=\phi(\omega^{b_j}\one)=\phi(\prod_{j\in J} M_j)=\prod_{j\in J}\phi(M_j)\; .\]
    \end{itemize}
    Consequently, $\phi \circ \eta: \Gamma \ra \mH^{\otimes n}(\zz_p)$ also defines a solution of the LCS.
    
    Finally, the existence of a classical solution of the LCS follows from Thm. 2 in \cite{QassimWallman2020}, which defines a map $v: \mH(\zz_p) \ra \zz_p$---a homomorphism in abelian subgroups. Similar to above, it follows that $v \circ \phi \circ \eta: \Gamma \ra \zz_p$ yields a classical solution of the LCS described by $\Gamma$.
\end{proof}

More generally, the argument applies to any map $\phi:G\to H$ that restricts to a group homomorphism on $p$-torsion abelian subgroups. 

\begin{corollary}\label{cor: reduced solutions from quasi-local homos}
    Let $\Gamma$ be the solution group of a LCS over $\zz_p$ for $p$ odd prime, let $\eta: \Gamma \ra G$ be a solution and $\phi: G \ra H$ be a map that restricts to a group homomorphism on $p$-torsion abelian subgroups of $G$.
    Then $\phi \circ \eta: \Gamma \ra H$ is a solution to the LCS.
\end{corollary}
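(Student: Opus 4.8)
The plan is to abstract the verification carried out in the proof of Thm.~\ref{thm: main result - Clifford hierarchy}, isolating the only property of $\phi$ that was actually used there---namely that it restricts to a group homomorphism on $p$-torsion abelian subgroups. Since $\eta:\Gamma \ra G$ is a group homomorphism, the images $\eta(J),\eta(g_1),\dots,\eta(g_n)$ are $p$-torsion elements of $G$ satisfying the defining relations (a)--(c) of Def.~\ref{def: solution group}. Because $\Gamma$ is finitely presented, to show that $\phi\circ\eta$ is a solution it suffices to check that the generator images $\phi(\eta(J))$ and $\phi(\eta(g_i))$ again satisfy (a)--(c); the universal property of $\Gamma$ then supplies an extending homomorphism $\Gamma \ra H$ agreeing with $\phi\circ\eta$ on generators.

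First I would record the key localisation observation: every defining relation of $\Gamma$ involves only generators whose $\eta$-images pairwise commute. Indeed, relation (b) forces $\eta(J)$ to be central in $\eta(\Gamma)$ and each same-row pair $\eta(g_j),\eta(g_k)$ to commute; hence for every row $i$ the elements $\{\eta(g_j) : A_{ij}\neq 0\}$ together with $\eta(J)$ generate a $p$-torsion abelian subgroup $A_i\subset G$. On each such $A_i$ (and on every cyclic subgroup generated by a single $\eta(g_i)$) the map $\phi$ is, by hypothesis, a group homomorphism.

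With this in hand the three relations follow as in (i)--(iii) of Thm.~\ref{thm: main result - Clifford hierarchy}. For $p$-torsion (a), $\eta(g_i)$ generates a cyclic $p$-torsion subgroup, so $\phi(\eta(g_i))^p=\phi(\eta(g_i)^p)=\phi(e)=e$, and likewise for $J$. For commutativity (b), a commuting $p$-torsion pair generates a $p$-torsion abelian subgroup on which $\phi$ is a homomorphism, so $\phi$ preserves the trivial commutator. For constraint satisfaction (c), I apply the homomorphism property of $\phi$ on the abelian subgroup $A_i$:
\begin{equation*}
\prod_{j=1}^n \phi(\eta(g_j))^{A_{ij}} = \phi\Big(\prod_{j=1}^n \eta(g_j)^{A_{ij}}\Big) = \phi\big(\eta(J)^{b_i}\big) = \phi(\eta(J))^{b_i}\; ,
\end{equation*}
where the middle equality uses relation (c) for $\eta$ and the last uses $\eta(J)\in A_i$.

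The main point---and the only subtlety---is that $\phi$ is \emph{not} a homomorphism on all of $G$, so $\phi\circ\eta$ need not be multiplicative on non-commuting elements of $\Gamma$; the content of the statement is precisely that each defining relation of the solution group is supported inside a single $p$-torsion abelian subgroup, where $\phi$ does behave homomorphically. Making this support explicit---in particular folding the central element $\eta(J)$ into the abelian subgroup $A_i$ used for relation (c)---is the crux, and once it is in place the argument is identical to that of Thm.~\ref{thm: main result - Clifford hierarchy}, with the concrete map $\phi:K^{\otimes n}_Q(p)_{(p)}\ra \mH^{\otimes n}(\zz_p)$ replaced by an arbitrary $\phi:G\ra H$ satisfying the stated hypothesis.
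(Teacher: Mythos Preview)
Your proof is correct and follows exactly the approach the paper indicates: the paper's own proof is the single sentence ``The constraints of the LCS hold in abelian subgroups of $\Gamma$, hence, are preserved by $\phi$ (cf. Thm.~\ref{thm: main result - Clifford hierarchy}),'' and you have simply unpacked that reference by explicitly verifying relations (a)--(c) inside the relevant $p$-torsion abelian subgroups, including the careful observation that $\phi\circ\eta$ need only agree with a genuine homomorphism on generators.
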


\begin{proof}
 Constraints of LCS hold in abelian subgroups of $\Gamma$, hence, are preserved by $\phi$ (Thm. \ref{thm: main result - Clifford hierarchy}).
\end{proof}

\begin{remk}\label{rm: diagonal Clifford hierarchy}
Thm.~\ref{thm: main result - Clifford hierarchy} is a generalisation of Thm.~2 in \cite{QassimWallman2020}. The latter proves that any solution in the Heisenberg-Weyl group $\mc{P}^{\otimes n} \cong \mH^{\otimes n}(\zz_p) \cong K^{\otimes n}_{\hat{\zz}_p}(p) = (\hat{\zz}_p\rtimes \langle X \rangle)^{\otimes n}$ for $p$ odd prime is classical.  Diagonal elements $S = \mathrm{diag}(\xi(0),\cdots,\xi(d-1)) \in \mH(\zz_p)$ are restricted to the Pontryagin dual $\hat{\zz}_p\subset T_{(p)}$; equivalently they lie in the first level of the special diagonal Clifford hierarchy $\mc{SD}_1(d)$ (cf. \cite{CuiGottesmanKrishna2017}). Thm.~\ref{thm: main result - Clifford hierarchy} generalises these to higher levels in the special diagonal Clifford hierarchy $\mc{SD}_k(d)$. In particular, we note that Thm.~\ref{thm-poly} (see App.  \ref{sec: app K_Q(p)}) closely resembles the classification of the diagonal Clifford hierarchy of Thm.~2 in \cite{CuiGottesmanKrishna2017}, as well as the resource analysis of contextuality in $ld$-MBQC of Thm.~5 in \cite{FrembsRobertsCampbellBartlett2023}. Finally, we will discuss the even prime case in App. \ref{even prime}.
\end{remk}

\section{Discussion}\label{sec: discussion}

The Mermin-Peres square in Fig.~\ref{fig: MP square} defines a linear constraint system (LCS) over the Boolean ring $\zz_2$, which admits a quantum solution in terms of two-qubit Pauli operators. To date, no generalisation of the Mermin-Peres square to qudit systems, more generally no finite-dimensional quantum solution to a LCS over $\zz_d$ for $d$ odd, has been found.\footnote{Note that possibly infinite-dimensionsal solutions to LCS over $\zz_d$ for arbitrary $d$ exist by \cite{ZhangSlofstra2020}.} It has been shown that no such examples can be constructed within the Heisenberg-Weyl group, equivalently the qudit Pauli group \cite{QassimWallman2020}.  Our main theorem, Thm.~\ref{thm: main result - Clifford hierarchy}, generalises this result beyond Pauli operators to the diagonal Clifford hierarchy, lending support to the conjecture that no quantum solutions exist in all of $SU(d)$.

\rev{The connection with the Clifford hierarchy} further suggests an avenue either towards generalisations of our result or towards examples of contextual LCS in odd prime dimension. Recall that $K_Q(p)$ contains the diagonal Clifford hierarchy (up to phase). However, $K_Q(p)$ does not contain general Clifford operators.  Still, for $p=3$ every single-qutrit Clifford operator $M$ is semi-Clifford, i.e., $M = C_1DC_2$ with $C_1,C_2 \in \mc{C}_2(3)$ and $D$ diagonal \cite{GrossNest2007,ZengChenChuang2008}. Moreover, this result is conjectured to be true for all prime dimensions \cite{deSilva2021}.
\rev{This hints at a generalisation of our construction to tensor products of groups generated from (semi-)Clifford operators, and, to analyse the existence of quantum solutions to LCS in groups generated from gate sets in quantum computation more generally.}\\

\paragraph*{Acknowledgements.} This work was conducted during a visit of the first author at Bilkent University, Ankara. MF was
supported through grant number FQXi-RFP-1807 from the Foundational Questions Institute and Fetzer Franklin Fund, a donor advised fund of Silicon Valley Community Foundation, and ARC Future Fellowship FT180100317. CO and HC are supported by the Air Force Office of Scientific Research under award number FA9550-21-1-0002.

\appendix

\section{Properties of $K_Q(p)$}\label{sec: app K_Q(p)}

\begin{proof}[Proof of Lm. \ref{lem-for}]
{\label{pf useful formula}}
($1$) We do induction on $n$. The statement is trivial for $n=1$. Assume the statement is true when $n=k$ and consider the case $n=k+1$. We have
\begin{align*}
(\xi,b)^{k+1}=(\xi,b)^k(\xi,b)=\left(\prod_{i=0}^{k-1}(ib)\cdot\xi,kb\right)(\xi,b)&=\left((\prod_{i=0}^{k-1}(ib)\cdot\xi)(kb)\cdot\xi \, ,\, (k+1)b \right)\\
&=\left(\prod_{i=0}^k(ib)\cdot\xi \, , \, (k+1)b\right)\; .
\end{align*}

($2$) We first prove the statement for $n=1$. In this case, we have 
\begin{equation}{\label{eq-comm}}
(\xi,b)(\chi,0)=(\xi b\cdot \chi,b)=(b\cdot\chi,0)(\xi,b)\; .
\end{equation}
Now, we consider the general case. By part $(i)$, we have
\[(\xi,b)^n(\chi,0)=\left(\prod_{i=0}^{n-1}(ib)\cdot\xi \, , \, nb\right)(\chi,0)\stackrel{\text{Eq. (\ref{eq-comm})}}{=}((nb)\cdot\chi,0)\left(\prod_{i=0}^{n-1}(ib)\cdot\xi \, , \, nb\right)=((nb)\cdot\chi,0)(\xi,b)^n\; .\]

($3$) We proceed by induction on $n$. By part ($ii$), the statement holds for $n=1$. Consider $n=k+1$. We have
\begin{align*}
\big((\xi,b)(\chi,0)\big)^{k+1}=\big((\xi,b)(\chi,0)\big)^{k}(\xi,b)(\chi,0)&=\left(\prod_{i=1}^k(ib)\cdot\chi,0\right)(\xi,b)^k(\xi,b)(\chi,0)\\
&=\left(\prod_{i=1}^k(ib)\cdot\chi,0\right)(\xi,b)^{k+1}(\chi,0)\\
&=\left(\prod_{i=1}^k(ib)\cdot\chi,0\right)((k+1)b\cdot\chi,0)(\xi,b)^{k+1}\, \, \, \, \, \, {\text{(by part }(ii))} \\
&=\left(\prod_{i=1}^{k+1}(ib)\cdot\chi,0\right)(\xi,b)^{k+1}\; .
\end{align*}
\end{proof}

Next, we are going to prove that for every $S_\xi\in T_{(p^m)}$, the function $\xi$ can be expressed uniquely as
\begin{equation*}
\xi(q)=\prod_{k=1}^m exp\left[\frac{2\pi i}{p^k}\left(\sum_{a=0}^{p-1}\vartheta_{k,a}q^{a}\right)\right]\; ,
\end{equation*}
where $\vartheta_{i,j}\in \Z_p$. We define the following sets:

\[Q_m^{(p)}=\left\{\begin{pmatrix}x_0 &  & \\ & \ddots & \\ & & x_{p-1}\end{pmatrix}\suchthat \forall i\in {0,\cdots,p-1},\,x_i=exp\left(\frac{2\pi i}{p^m}y_i\right) \text{where $y_i\in\Z_{p^m}$}\right\}\; ,\]

\[
G_m^{(p)}=\left\{S_\xi=\begin{pmatrix}\xi(0) &  & \\ & \ddots & \\ & & \xi(p-1)\end{pmatrix}\suchthat \xi(q)=\prod_{k=1}^m exp\left[\frac{2\pi i}{p^k}\left(\sum_{a=0}^{p-1}\vartheta_{k,a}q^{a}\right)\right]\text{where } \vartheta_{k,a}\in\Z_p\right\}\; .
\]

\begin{remk}
We have the following remarks related to the above definitions:

($i$) Both $Q_m^{(p)}$ and $G_m^{(p)}$ are groups with respect to matrix multiplication.

($ii$) For $i\in\{1,\cdots,p\}$, define a matrix $M_i$ to be as follows
\[(M_i)_{jk}=\begin{cases}
		exp\left(\frac{2\pi i}{p^m}\right) & \text{if $j=k=i$} \\
		1 & \text{if $j=k$ and $j,k\neq i$}\\
		0 & \text{otherwise}
\end{cases}\; .\]
Then $\{M_1,\cdots,M_p\}$ is a generating set of $Q_m^{(p)}$.

($iii$) $T_{(p^m)}$ is a subgroup of $Q_m^{(p)}$. Explicitly, we have $T_{(p^m)}=\{M\in Q_m^{(p)}\suchthat det(M)=1\}$.
\end{remk}

\begin{remk}[Fermat's little theorem]
Let $p$ be a prime. If $a\in\Z$ and $a$ is not divisible by $p$, then we have 
\[a^{p-1}\equiv 1 \mod p\; .\]
\end{remk}

\begin{corollary}{\label{cor-prime}}
Let $p$ be a prime and $q\in\{1,\cdots,p-1\}$. Then there exists $0\neq c\in\N$ such that $1+(p-1)q^{p-1}=cp$.
\end{corollary}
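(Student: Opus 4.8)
The plan is to derive this directly from Fermat's little theorem, which has just been recalled. Since $q \in \{1,\cdots,p-1\}$, $q$ is not divisible by the prime $p$, so Fermat's little theorem gives $q^{p-1} \equiv 1 \pmod p$. I would then simply substitute this congruence into the expression $1+(p-1)q^{p-1}$ and reduce modulo $p$.

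Concretely, first I would compute
\[
1 + (p-1)q^{p-1} \equiv 1 + (p-1)\cdot 1 = p \equiv 0 \pmod p\; ,
\]
using $q^{p-1}\equiv 1$ in the first step. This shows that $p$ divides $1+(p-1)q^{p-1}$, so by definition of divisibility there exists an integer $c\in\Z$ with $1+(p-1)q^{p-1}=cp$. This establishes the existence of $c$; the only remaining points are that $c$ is a \emph{positive} natural number and nonzero.

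For positivity, I would observe that each factor on the left is at least $1$: we have $p-1\geq 1$ (as $p\geq 2$) and $q^{p-1}\geq 1$ (as $q\geq 1$), so $1+(p-1)q^{p-1}\geq 2>0$. Since $p>0$ and $cp=1+(p-1)q^{p-1}>0$, it follows that $c>0$, hence $c\in\N$ and $c\neq 0$, as required.

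There is no real obstacle here: the statement is essentially an immediate corollary of Fermat's little theorem, and the only point requiring a (trivial) separate check is the sign condition $c\neq 0$, which follows from the manifest positivity of the left-hand side. The one bookkeeping subtlety worth noting is the degenerate case $p=2$, where $q$ can only be $1$ and $1+(p-1)q^{p-1}=1+1=2=1\cdot p$, giving $c=1$; this is already subsumed by the general argument above.
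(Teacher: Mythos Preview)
Your proof is correct and follows essentially the same approach as the paper: apply Fermat's little theorem to get $q^{p-1}\equiv 1\pmod p$, then conclude $1+(p-1)q^{p-1}\equiv 0\pmod p$. Your explicit check that $c>0$ is a nice addition that the paper leaves implicit.
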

\begin{proof}
Since $p$ is a prime and $q$ is not divisible by $p$, we have
\begin{align*}
&q^{p-1}\equiv 1 \mod p\\
\implies & (p-1)q^{p-1}\equiv p-1 \mod p\\
\implies & 1+(p-1)q^{p-1}\equiv p\equiv 0 \mod p\; .
\end{align*}
Thus there exists $c\in\N$ such that $1+(p-1)q^{p-1}=cp$.
\end{proof}
\begin{theorem}{\label{thm-poly}}
Let $p$ be a prime. For any $m\in\N$, we have $Q_m^{(p)}=G_m^{(p)}$.
In particular, for every element $S_\xi\in T_{(p^m)}$, we can express $\xi$ uniquely as
\begin{equation}{\label{lab-xi}}
\xi(q)=\prod_{k=1}^m exp\left[\frac{2\pi i}{p^k}\left(\sum_{a=0}^{p-1}\vartheta_{k,a}q^{a}\right)\right]\; ,
\end{equation}
where $\vartheta_{k,a}\in\Z_p$ (and $\prod_{q=0}^p \xi(q)=1$). 

\end{theorem}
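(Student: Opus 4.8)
The plan is to prove the sharper claim that, after fixing canonical representatives $\vartheta_{k,a}\in\{0,1,\dots,p-1\}$, the assignment $(\vartheta_{k,a})_{1\le k\le m,\,0\le a\le p-1}\mapsto S_\xi$ of Eq.~(\ref{lab-xi}) is a \emph{bijection} onto $Q_m^{(p)}$; then $Q_m^{(p)}=G_m^{(p)}$ is surjectivity and the word ``uniquely'' is injectivity. First I would identify $Q_m^{(p)}$ with the free module $\mathrm{Map}(\zz_p,\zz/p^m\zz)\cong(\zz/p^m\zz)^p$ by writing $\xi(q)=\exp(2\pi i\,y_q/p^m)$ and sending $S_\xi\mapsto(y_q)_q$. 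Under this identification the generating formula becomes the congruence $y_q\equiv\sum_{k=1}^m p^{m-k}\big(\sum_{a=0}^{p-1}\vartheta_{k,a}q^a\big)\pmod{p^m}$, and the inclusion $G_m^{(p)}\subseteq Q_m^{(p)}$ is the trivial observation that a common denominator $p^m$ turns each $\xi(q)$ into a $p^m$-th root of unity. Both sides have exactly $p^{mp}$ elements, so injectivity and surjectivity are equivalent; the induction below establishes them together.

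The engine is the base case $m=1$, where $Q_1^{(p)}\cong\mathrm{Map}(\ff_p,\ff_p)$ and the formula reduces to $q\mapsto\sum_a\vartheta_{1,a}q^a$. Here I would invoke the classical fact that the monomials $\{q^a\}_{a=0}^{p-1}$ form an $\ff_p$-basis of $\mathrm{Map}(\ff_p,\ff_p)$: the Vandermonde matrix $[\,q^a\,]_{q,a}$ is invertible because the $p$ points $q\in\ff_p$ are distinct, so every function is a unique polynomial of degree $<p$. (Equivalently, Fermat's little theorem, in the form of Cor.~\ref{cor-prime}, produces the indicator polynomials $1-q^{p-1}$ and hence Lagrange interpolation over $\ff_p$.) This already gives $Q_1^{(p)}=G_1^{(p)}$ with unique coefficients $\vartheta_{1,a}$.

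For the inductive step I would peel off one ``level'' at a time. Given $y\in\mathrm{Map}(\zz_p,\zz/p^m\zz)$, reduce mod $p$: since $p^{m-k}\equiv0\pmod p$ for all $k\le m-1$, only the $k=m$ term survives, so $y_q\equiv P_m(q)\pmod p$ with $P_m(q)=\sum_a\vartheta_{m,a}q^a$. By the base case this determines $\{\vartheta_{m,a}\}$ uniquely. Then $y_q-P_m(q)$ is divisible by $p$, and $\tilde y_q:=(y_q-P_m(q))/p \bmod p^{m-1}$ defines an element of $\mathrm{Map}(\zz_p,\zz/p^{m-1}\zz)$ satisfying $\tilde y_q\equiv\sum_{k=1}^{m-1}p^{(m-1)-k}P_k(q)\pmod{p^{m-1}}$; the induction hypothesis determines the remaining $\{\vartheta_{k,a}\}_{k\le m-1}$ uniquely, and reassembling via $y_q=P_m(q)+p\tilde y_q$ shows the chosen coefficients reproduce $y$. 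This proves existence and uniqueness at once. Finally the extra constraint $\prod_q\xi(q)=1$ defining $T_{(p^m)}\subset Q_m^{(p)}$ is simply the linear condition $\sum_q\sum_{k,a}p^{m-k}\vartheta_{k,a}q^a\equiv0\pmod{p^m}$, compatible with the parametrisation.

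The point demanding care --- and the reason the statement fixes $\vartheta_{k,a}$ rather than working with genuine $\zz_p$-scalars --- is that for $k\ge2$ the individual factor $\exp[\tfrac{2\pi i}{p^k}\vartheta_{k,a}q^a]$ depends on the integer \emph{lift} of $\vartheta_{k,a}$, not merely on its class mod $p$. The filtration by powers of $p$ resolves this cleanly: each peeling step reads the coefficients only modulo $p$ and then divides by $p$, so the distinct $1/p^k$ levels decouple and never interfere, making the recursion well-posed. The sole nontrivial mathematical input is the base-case bijection between functions and degree-$<p$ polynomials over $\ff_p$; everything else is bookkeeping with the $p$-adic filtration.
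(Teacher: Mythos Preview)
Your argument is correct and takes a genuinely different route from the paper. The paper proves only the inclusion $Q_m^{(p)}\subseteq G_m^{(p)}$ (the reverse being obvious) by an explicit \emph{construction}: it builds, by induction on $m$, an element $S_{\xi_0}\in G_m^{(p)}$ equal to $\mathrm{diag}(e^{2\pi i/p^m},1,\dots,1)$, using Fermat's little theorem (Cor.~\ref{cor-prime}) in the base case to write the indicator at $q=0$ as $\exp\big(\tfrac{2\pi i}{p}(1+(p-1)q^{p-1})\big)$, and then correcting the higher-level residues inductively. Translates of this single element generate $Q_m^{(p)}$, so equality follows. Uniqueness of the coefficients is not argued separately; it is left to an implicit cardinality match.

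Your approach replaces this explicit generator hunt with a clean $p$-adic peeling: reduce $y_q$ mod $p$ to read off the $k=m$ layer via the Vandermonde/Lagrange bijection over $\ff_p$, divide out by $p$, and recurse. This proves existence and uniqueness simultaneously and makes transparent why the levels $1/p^k$ decouple, at the cost of not producing closed formulas for the indicator elements. Both inductions rest on the same base-case fact (every function $\ff_p\to\ff_p$ is a unique polynomial of degree $<p$); the paper phrases it via Cor.~\ref{cor-prime}, you via the Vandermonde determinant.
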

\begin{proof}
This result can be obtained from Thm. ~2 in \cite{CuiGottesmanKrishna2017}. Here, we provide an independent proof.
It is clear that $G_m^{(p)}\subseteq Q_m^{(p)}$. We are left to show that $Q_m^{(p)}\subseteq G_m^{(p)}$. We claim that there exists $S_{\xi_0}\in G_m^{(p)}$ such that $S_{\xi_0}=diag\left(exp(\frac{2\pi i}{p^{m}}),1,\cdots,1\right)$. Then for $b\in\{1,\cdots,p-1\}$, we define $S_{\xi_b}\in G_m^{(p)}$ by $\xi_b(q)=\xi_0(q-b)$. Thus we have $S_{\xi_b}=diag(1,\cdots,1,exp(\frac{2\pi i}{p^m}),1,\cdots,1)$ where $exp(\frac{2\pi i}{p^m})$ appear in the $b+1^{th}$ position. Observe that any $M\in Q_m^{(p)}$ can be expressed as
\[M=S_{\xi_0}^{k_0}S_{\xi_1}^{k_1}\cdots S_{\xi_{m-1}}^{k_{m-1}}\; ,\]
where $k_0,\cdots,k_{m-1}\in \Z_{p^m}$. Therefore we can define $S_{\bar{\xi}}\in G_m^{(p)}$ where $\bar{\xi}=\prod_{b=0}^{m-1}\xi_b^{k_b}$ such that $S_{\bar{\xi}}=M$. Thus we can conclude that for any $M\in Q_m^{(p)}$, there exists $S_\xi\in G_m^{(p)}$ such that $S_\xi=M$, hence, $Q_m^{(p)}\subseteq G_m^{(p)}$.

Next we prove our claim. We proceed by induction on $m$. 
Consider the case $m=1$, and define
\[\xi(q)=exp\left(\frac{2\pi i}{p}(1+(p-1)q^{p-1})\right)\; .\]
Then for $q=0$, we have $\xi(0)=exp(\frac{2\pi i}{p})$. For $q\in\{1,\cdots,p-1\}$, by Cor. \ref{cor-prime}, there exists $c_q\in\N$ such that
\begin{equation}\label{eq: cor 5}
	1+(p-1)q^{p-1}=c_qp\; .
\end{equation}
Thus for $q\neq 0$, we have 
\[\xi(q)=exp\left(\frac{2\pi i}{p}(1+(p-1)q^{p-1})\right)=exp(2\pi i c_q)=1\; .\] 
Assume the statement is true for $m=n$, and consider the case $m=n+1$. We want to find $S_\xi\in G_{n+1}^{(p)}$ such that $S_\xi=diag\left(exp(\frac{2\pi i}{p^{n+1}}),1,\cdots,1\right)$. Let $M\in Q_n^{(p)}$ be given as
\[M=diag\left(1,exp\left(-\frac{2\pi i}{p^n}(c_1)\right),exp\left(-\frac{2\pi i}{p^n}(c_2)\right),\cdots,exp\left(-\frac{2\pi i}{p^n}(c_{p-1})\right)\right)\; .\]
By induction hypothesis, there exists $S_\theta\in G_n^{(p)}$ where
\[\theta(q)=\prod_{k=1}^n exp\left[\frac{2\pi i}{p^k}\left(\sum_{a=0}^{p-1}c_{k,a}q^{a}\right)\right]\; ,\]
such that
\[S_\theta=diag\left(1,exp\left(-\frac{2\pi i}{p^n}(c_1)\right),exp\left(-\frac{2\pi i}{p^n}(c_2)\right),\cdots,exp\left(-\frac{2\pi i}{p^n}(c_{p-1})\right)\right)\; .\]
Next, we define $S_\xi\in G_{n+1}^{(p)}$ where
\[\xi(q)=\theta(q)exp\left(\frac{2\pi i}{p^{n+1}}(1+(p-1)q^{p-1})\right)\; .\]
For $q=0$, we have $\xi(0)=\theta(0)exp(\frac{2\pi i}{p^{n+1}})=exp(\frac{2\pi i}{p^{n+1}})$. For $q\in\{1,\cdots,p-1\}$, we have 
\[\xi(q)=\theta(q)exp\left(\frac{2\pi i}{p^{n+1}}(1+(p-1)q^{p-1})\right)=exp\left(-\frac{2\pi i}{p^n}c_q\right)exp\left(\frac{2\pi i}{p^{n+1}}(c_qp)\right)=1\; ,\]
where we used Eq.~(\ref{eq: cor 5}). Consequently, there exists $S_\xi\in G_{n+1}^{(p)}$ such that $S_\xi=diag(exp(\frac{2\pi i}{p^{n+1}}),1,\cdots,1)$.
\end{proof}

\section{Even vs odd prime}{\label{even prime}}
In this section, we will discuss the group structure of $K^{\otimes n}_Q(2)$ where $Q=T_{(2^m)}$ for some $m\geq 1$. We will show that Thm. \ref{thm: main result - Clifford hierarchy} does not work for $p=2$.

\begin{lemma}{\label{lem dihedral}}
Let $Q=T_{(2^m)}$. Then $K_Q(2)$ is isomorphic to   the dihedral group $D_{2^{m+1}}$  of order $2^{m+1}$.
\end{lemma}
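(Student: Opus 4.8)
The plan is to realize $K_Q(2)$ as an internal semidirect product $T_{(2^m)}\rtimes\Span{X}$ and then match it against the standard dihedral presentation.

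First I would determine the structure of $Q=T_{(2^m)}$. The maximal torus $T=T(SU(2))$ consists of the diagonal matrices $\mathrm{diag}(a,a^{-1})$ with $|a|=1$, so its $2^m$-torsion subgroup $T_{(2^m)}=\set{S_\xi\in T\suchthat S_\xi^{2^m}=\one}$ is exactly the set of such matrices with $a^{2^m}=1$, i.e. $a$ a $2^m$-th root of unity. Hence $T_{(2^m)}=\Span{S}$ is cyclic of order $2^m$, generated by $S=\mathrm{diag}(\zeta,\zeta^{-1})$ with $\zeta=e^{2\pi i/2^m}$. It is worth noting that the determinant-one constraint $\prod_{q}\xi(q)=1$ is what collapses the torus to a single circle and forces $T_{(2^m)}\cong\zz_{2^m}$.

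Next I would compute the conjugation action of $X$ on this generator. By the split extension Eq.~(\ref{eq: split extension}), $K_Q(2)=Q\rtimes\Span{X}$ with $\Span{X}\cong\zz_2$, and the action is $b\cdot\xi(q)=\xi(q-b)$. For $b=1$ this swaps the two diagonal entries, so $XSX^{-1}=S_{1\cdot\xi}=\mathrm{diag}(\zeta^{-1},\zeta)=S^{-1}$; that is, $X$ acts on $T_{(2^m)}$ by inversion. Since $X$ is off-diagonal while $S$ is diagonal we have $\Span{X}\cap T_{(2^m)}=\set{\one}$, and with $X^2=\one$ this makes the extension a genuine internal semidirect product.

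Finally, sending $r\mapsto S$ and $s\mapsto X$ presents $K_Q(2)$ by generators $r,s$ subject to $r^{2^m}=s^2=e$ and $srs^{-1}=r^{-1}$ --- exactly the dihedral group $D_{2^{m+1}}$, whose order is $|T_{(2^m)}|\cdot 2=2^{m+1}$. The step needing the most care, and the real source of the even-prime anomaly, is the realization of $X$: for $p$ odd the shift lies in $SU(p)$ and has order $p$, but for $p=2$ one must take the order-two Pauli $X$ (which leaves $SU(2)$) rather than an $SU(2)$ lift of the Weyl reflection, since every such lift squares to $-\one$ and would instead yield the dicyclic (generalized quaternion) group of order $2^{m+1}$. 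Making this choice explicit, and checking that it is the operator relevant to $l2$-MBQC, is precisely what separates the dihedral conclusion here from the situation governing Thm.~\ref{thm: main result - Clifford hierarchy}.
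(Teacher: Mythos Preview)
Your proof is correct and follows essentially the same route as the paper: identify $T_{(2^m)}$ as the cyclic group $\Span{S}$ with $S=\mathrm{diag}(\zeta,\zeta^{-1})$, verify that $X$ acts on it by inversion, and read off the dihedral presentation $\Span{r,s\mid r^{2^m}=s^2=1,\,srs=r^{-1}}$. Your closing remark on the $SU(2)$ versus $U(2)$ issue (and the dicyclic alternative) is a nice observation that the paper does not make explicit, but it is commentary rather than a divergence in the argument.
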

\begin{proof}
Observe that $T_{(2^m)}$ can be expressed as
$$
T_{(2^m)}=\left\{\begin{pmatrix}a & 0\\0 & a^{-1}\end{pmatrix}\suchthat a\in \CC,\;  a^{2^m}=1\right\}\; ,
$$
which is a cyclic group generated by the matrix $M=\begin{pmatrix}e^{2\pi i/2^m} & 0\\0 & e^{-2\pi i/2^m}\end{pmatrix}$. Therefore 
\begin{equation}{\label{c1}}
K_{T_{(2^m)}}(2)=\left\langle M,X\right\rangle\; .
\end{equation}
Notice that $M$ has order $2^m$ and $X$ has order 2. By simple calculation, we also have $XMX=M^{-1}$. By identifying $X$ with reflection and $M$ with rotation in the dihedral group, we have
\begin{equation*}
K_{T_{(2^m)}}(2)=\langle M,X\rangle\cong \langle r,s\suchthat r^{2^m}=s^2=1, srs=r^{-1}\rangle\cong D_{2^{m+1}}\; .\qedhere
\end{equation*}
\end{proof}

\begin{remk}{\label{remk k4}}
Observe that $K_{T_{(4)}}(2)\cong D_8\cong H(\Z_2)=\langle X,Z\rangle$\footnote{The isomorphism is given by mapping $X\in K_{T_{(4)}}(2)$ to $X\in H(\Z_2)$ and mapping $\begin{pmatrix}i&0\\0 & -i\end{pmatrix}\in K_{T_{(4)}}(2)$ to $XZ\in H(\Z_2)$.} and $K_{T_{(2)}}(2)\cong \zz_2^2$. 
\end{remk}

\begin{lemma}{\label{lem-even prime}}
Assume that $Q=T_{(2^m)}$ where $m\geq 2$. Let $M=\begin{pmatrix}a&0\\0&a^{-1}\end{pmatrix}X^b$ and $M'=\begin{pmatrix}a'&0\\0&{a'}^{-1}\end{pmatrix}X^{b'}$ be elements of $K_Q(2)$ such that $[M,M']=\pm \one\in Z(K_Q(2))$. Then one of the following cases hold:
\begin{enumerate}[(1)]
\item $b=b'=0$ and $M,M'\in Q$. In particular, $[M,M']=\one$.
\item  $b=1$, $b'=0$ and  $M'\in T_{(4)}$. In addition, if $[M,M']=\one$, then $M'\in Z(K_Q(p))$. 
\item $b=0$, $b'=1$ and $M\in T_{(4)}$.  In addition, if $[M,M']=\one$, then $M\in Z(K_Q(p))$. 
\item  $b=b'=1$ and $M=S_\chi M'$ where $S_\chi\in T_{(4)}$. In addition, if $[M,M']=\one$, then $M=\pm M'$.
\end{enumerate}
\end{lemma}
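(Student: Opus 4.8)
The plan is to mirror the proof of Lm.~\ref{lm: commutation relations gen HW group up to phase}, specialising every computation to $d=2$, where $q$ and $b$ range only over $\{0,1\}$ and all products collapse to two factors. First I would record that for $m\geq 2$ the dihedral description of Lm.~\ref{lem dihedral} pins down the center: the unique order-two central rotation is $M^{2^{m-1}}=\mathrm{diag}(e^{\pi i},e^{-\pi i})=-\one$, so $Z(K_Q(2))=\{\one,M^{2^{m-1}}\}=\{\pm\one\}$. This justifies the hypothesis $[M,M']=\pm\one$. I would then reuse the commutator formula derived in Eq.~\eqref{eqq}, namely $[M,M']=(\hat\xi,0)$ with $\hat\xi(q)=\xi(q)\,\xi'(q-b)\,\xi^{-1}(q-b')\,\xi'^{-1}(q)$, and split into the four cases indexed by $(b,b')\in\{0,1\}^2$.

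Case $(0,0)$ is immediate: $M,M'\in Q=T_{(2^m)}$ is a diagonal, hence abelian, group, so $[M,M']=\one$. For the mixed case $(1,0)$ the formula reduces to $\hat\xi(q)=\xi'(q-1)\xi'^{-1}(q)$, so $[M,M']=\epsilon\one$ with $\epsilon\in\{\pm1\}$ forces the functional equation $\xi'(q-1)=\epsilon\,\xi'(q)$ on $\zz_2$, i.e. $\xi'(1)=\epsilon\,\xi'(0)$. Combining this with the determinant constraint $\xi'(0)\xi'(1)=1$ inherited from $S_{\xi'}\in T_{(2^m)}$ yields $\xi'(0)^2=\epsilon$, so $\xi'(0)$ is a fourth root of unity and $M'\in T_{(4)}$; when $\epsilon=1$ one gets $\xi'(0)^2=1$ and $\xi'(1)=\xi'(0)$, i.e. $M'=\pm\one\in Z(K_Q(2))$. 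Case $(0,1)$ is handled by the same computation with the roles of $M$ and $M'$ exchanged.

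The case $(1,1)$ carries the content distinguishing $p=2$. Here I would first observe that $M{M'}^{-1}$ is diagonal: using the group law and $-1=1$ in $\zz_2$ one computes $M{M'}^{-1}=(\xi,1)\big(\xi'^{-1}(\,\cdot-1),1\big)=S_{\xi\xi'^{-1}}=:S_\chi\in T_{(2^m)}$, which already gives $M=S_\chi M'$. Conjugating the diagonal $S_\chi$ by $M'=S_{\xi'}X$ and using $XS_\chi X^{-1}=S_{1\cdot\chi}$, the commutator collapses to $[M,M']=S_{\chi\,(1\cdot\chi^{-1})}$, so its defining function is $\chi(q)\chi^{-1}(q-1)$ (one checks this equals the $\hat\xi$ of Eq.~\eqref{eqq}). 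Exactly as before, the condition $\chi(q)\chi^{-1}(q-1)=\epsilon$ together with the determinant identity $\chi(0)\chi(1)=\det S_\xi/\det S_{\xi'}=1$ forces $\chi(0)^2=\epsilon$, hence $S_\chi\in T_{(4)}$; and $\epsilon=1$ gives $S_\chi=\pm\one$, that is $M=\pm M'$.

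I expect no deep obstacle: the $d=2$ specialisation makes every sum and product a two-term expression, so the work is essentially bookkeeping. The only point requiring care is the consistent tracking of the sign $\epsilon$ through the determinant constraint, and the resulting phenomenon---specific to the even case---that the functional equation lands $S_{\xi'}$ (resp. $S_\chi$) in $T_{(4)}$ rather than in $T_{(2)}$. This is precisely the feature that obstructs an analogue of the map $\phi$ from Def.~\ref{def phi map}, and hence underlies the failure of Thm.~\ref{thm: main result - Clifford hierarchy} at $p=2$.
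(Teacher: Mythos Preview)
Your proof is correct but takes a genuinely different route from the paper's. The paper exploits the dihedral identification of Lm.~\ref{lem dihedral}: it writes $M=r^k s^b$, $M'=r^{k'}s^{b'}$ in $D_{2^{m+1}}$ and computes the commutators purely in terms of the dihedral relations $srs=r^{-1}$, $r^{2^m}=1$. For instance, in case $(1,0)$ one gets $[r^ks,r^{k'}]=r^{-2k'}$, and the hypothesis $r^{-2k'}\in\{1,r^{2^{m-1}}\}$ immediately gives $(r^{k'})^4=1$, i.e.\ $M'\in T_{(4)}$; case $(1,1)$ is handled analogously via $[r^ks,r^{k'}s]=r^{2(k-k')}$.

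You instead specialise the commutator formula of Eq.~\eqref{eqq} to $d=2$ and run the functional-equation argument of Lm.~\ref{lm: commutation relations gen HW group up to phase}, with the determinant constraint $\prod_q\xi(q)=1$ doing the work that Fermat/Cor.~\ref{cor-prime} did in the odd case. This has the virtue of uniformity with the odd-prime analysis and makes transparent exactly where the even case diverges: the equation $\xi'(0)^2=\epsilon$ (resp.\ $\chi(0)^2=\epsilon$) lands you in $T_{(4)}$ rather than $T_{(2)}$. The paper's dihedral argument is shorter once Lm.~\ref{lem dihedral} is in hand, but yours requires no new structural input and dovetails cleanly with the surrounding section.
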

\begin{proof}
First, we consider the case where $b=b'=0$. In this case, we have $M,M'\in Q$. Thus $[M,M']=\one$. Next, by Lm. \ref{lem dihedral}, we can identity $K_Q(2)$ with the dihedral group. In other words, we have $M=r^ks^b$ and $M'=r^{k'}s^{b'}$ for some $k,k'\in\N$, where $r$ and $s$ are the rotation and reflection generator of the dihedral group; respectively. Then our assumption $[M,M']=\pm\one$ can be rewritten as $[r^ks^b,r^{k'}s^{b'}]=r^n$ where $r^{2n}=1$. 

First, we consider $b=1$ and $b'=0$. In this case, we have $M=r^ks$ and $M'=r^{k'}$. By calculation, we get 
$$r^{-n}=r^{n}=[M,M']=(r^ks)(r^{k'})(sr^{-k})(r^{-k'})=r^kr^{-k'}r^{-k}r^{-k'}=r^{-2k'}\Rightarrow (r^{k'})^4=1\; .$$
Thus we can conclude that $M'^4=\one$ and therefore $M'\in T_{(4)}$. In addition, if $[M,M']=\one$, we have $1=r^{2k'}$, then $M'^2=\one$. It follows that $M'=\pm \one\in Z(K_Q(2))$. The case $b=0$ and $b'=1$ is similar.

Next, consider $b=b'=1$. In this case, we have $M=r^ks$ and $M'=r^{k'}s$. By calculation, we get 
$$r^n=[M,M']=(r^ks)(r^{k'}s)(sr^{-k})(sr^{-k'})=r^kr^{-k'}r^kr^{-k'}=r^{2(k-k')}\; .$$
We can now conclude that $(r^{k-k'})^4=1$ and $r^{k'}=r^{-n}r^{k-k'}r^k$. Observe that $(r^{-n}r^{k-k'})^4=1$. Thus $M=S_\chi M'$ for some $S_\chi\in T_{(4)}$. In addition, if $[M,M']=\one$, we have $(r^{k-k'})^2=1$ and therefore $M=\pm M'$.
\end{proof}
\begin{corollary}\label{cor:poset-2}
Maximal $2$-torsion abelian subgroups of $K_Q(2)$ fall into two classes:
\begin{itemize}
\item the subgroup $T_{(2)}$, and
\item $\Span{-\one,S_\xi X}$ where $S_\xi\in Q=T_{(2^m)}$.
\end{itemize} 
Any two distinct maximal $2$-torsion abelian subgroups intersect at the center $Z(K_Q(2))=\{\pm\one\}$.
\end{corollary}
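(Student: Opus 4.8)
The plan is to mirror the proof of the odd-prime classification in Cor.~\ref{cor: abelian subgroups in K_Q(p)}, replacing the commutation analysis of Lm.~\ref{lm: commutation relations gen HW group up to phase} by its even-prime counterpart Lm.~\ref{lem-even prime} and specialising to genuine commutativity, $[M,M']=\one$. The first step is to pin down the $2$-torsion elements of $K_Q(2)$. A diagonal element $S_\xi\in Q=T_{(2^m)}$ squares to $\one$ precisely when $S_\xi\in T_{(2)}$, and a direct computation (or the dihedral dictionary of Lm.~\ref{lem dihedral}) shows $T_{(2)}=\{\pm\one\}$; every off-diagonal generator $S_\xi X$ already satisfies $(S_\xi X)^2=\det(S_\xi)\one=\one$ because $S_\xi\in T\subset SU(2)$. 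Hence the $2$-torsion set consists of $\{\pm\one\}$ together with all reflections $\pm S_\xi X$, matching the rotation/reflection picture of Lm.~\ref{lem dihedral}.

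Next I would read off the commuting pairs of $2$-torsion elements from Lm.~\ref{lem-even prime} with $[M,M']=\one$. If a candidate abelian subgroup contains only diagonal elements, it sits inside $T_{(2)}$, giving the first class. If it contains a reflection $M=S_\xi X$, then by cases~(2)--(3) of Lm.~\ref{lem-even prime} any diagonal element commuting with $M$ lies in $T_{(4)}$, and imposing $2$-torsion forces it into $T_{(2)}=Z(K_Q(2))=\{\pm\one\}$; by case~(4) any reflection $M'$ commuting with $M$ satisfies $M=\pm M'$, i.e.\ $M'\in\{S_\xi X,-S_\xi X\}$. Collecting these, the maximal $2$-torsion abelian subgroup through a reflection is exactly $\{\one,-\one,S_\xi X,-S_\xi X\}=\Span{-\one,S_\xi X}$, a Klein four-group, which is the second class.

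Finally I would compute the intersections. Two distinct subgroups of the second type share only $\{\pm\one\}$, since $\Span{-\one,S_\xi X}\cap\Span{-\one,S_{\xi'}X}$ contains a common reflection only if $S_\xi X=\pm S_{\xi'}X$, which makes the subgroups coincide; and $T_{(2)}$ meets each Klein four-group in $\{\pm\one\}$. As $Z(K_Q(2))=\{\pm\one\}$ by Lm.~\ref{lem dihedral}, every pairwise intersection is the center. The one point demanding care---and the source of the contrast with the odd-prime case---is the torsion bookkeeping in the middle step: Lm.~\ref{lem-even prime} naturally produces elements of $T_{(4)}$, which have order $4$ rather than $2$, so they must be discarded down to $T_{(2)}$. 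This is precisely where the even case degenerates: unlike the odd-prime $T_{(p)}$, which is a genuine large maximal abelian subgroup disjoint from the reflection subgroups away from the center, here $T_{(2)}$ collapses onto the center $\{\pm\one\}$ (while $K_{T_{(4)}}(2)\cong H(\zz_2)$ by Remk.~\ref{remk k4}), and it is this collapse that ultimately obstructs any analogue of the homomorphism $\phi$ and hence of Thm.~\ref{thm: main result - Clifford hierarchy} for $p=2$.
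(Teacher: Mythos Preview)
Your proof is correct and follows precisely the paper's approach: the paper's own proof is the one-line ``This follows immediately from Lm.~\ref{lem-even prime} with $[M,M']=\one$,'' and you have simply unpacked that line. Your additional observation that $T_{(2)}=\{\pm\one\}=Z(K_Q(2))$ is correct and in fact sharper than the stated corollary, since it shows the first ``class'' is degenerate and sits inside every subgroup of the second class; this is exactly the collapse the authors exploit in the surrounding discussion of why Thm.~\ref{thm: main result - Clifford hierarchy} fails for $p=2$.
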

\begin{proof}
    This follows immediately from Lm. \ref{lem-even prime} with $[M,M']=\one$.
 \end{proof}

\begin{example}\label{ex: p=2}
Define $M=S_{\xi_1} X$ and $N=S_{\xi_2}X$ where
\begin{align*}
\xi_1(q)&=(-1)^{1+q}i\; ,\\
\xi_2(q)&=(-1)^qi\; .
\end{align*}
In other words, we have $M=\begin{pmatrix}0&-i\\ i&0\end{pmatrix}$ and $N=\begin{pmatrix}0&i\\ -i&0\end{pmatrix}$. Notice that $M^2=N^2=\one$ and $[M,N]=\one$. Suppose that we define $\phi$ for $p=2$ analogous to the odd case, more precisely using the $P$ map in Def. \ref{def phi map}, we have
\begin{align*}
\phi(M)=\begin{pmatrix}-1 & 0\\0 &1\end{pmatrix}X=\begin{pmatrix}0 & -1\\1 &0\end{pmatrix}\; ,\\
\phi(N)=\begin{pmatrix}1 & 0\\0 &-1\end{pmatrix}X=\begin{pmatrix}0 & 1\\-1 &0\end{pmatrix}\; .
\end{align*} 
Thus we have $\phi(M)\phi(N)=\one$. On the other hand, we have $MN=-\one$ and therefore 
\[\phi(MN)=-\one\neq \phi(M)\phi(N)\; .\]
Notice that both $\phi(M)$ and $\phi(N)$ are elements of $H(\Z_2)$ rather than $K_{T_{(2)}}(2)$.
\end{example}

\begin{remk}
Thm. \ref{thm: main result - Clifford hierarchy} does not work for $p=2$.  Note first that unlike the odd prime case (cf. Rmk.~\ref{rm: diagonal Clifford hierarchy}),  for $p=2$ we have $\mc{P}_2 \not\cong H(\zz_2) 
\not\cong K_{T_{(2)}}(2) = T_{(2)} \rtimes \langle X\rangle = \langle -\one,X\rangle$.
Now, as defined in Def. \ref{def phi map}, $\phi$ maps into $H^{\otimes n}(\zz_2)$ rather than $\langle -\one,X\rangle^{\otimes n}$, yet it is easy to see that $\phi: K^{\otimes n}_{T_{(2)}} \ra H^{\otimes n}(\zz_2)$ is not a homomorphism when restricted to $2$-torsion abelian subgroups (see Ex.~\ref{ex: p=2} above).

Moreover,  there is no way to adapt Def.  \ref{def phi map} in such a way that $\phi$ maps $K_Q^{\otimes n}(2)$ into $\langle -\one,X\rangle^{\otimes n}$ for $n>1$. If that were the case, analogous arguments to those in Thm.~\ref{thm: main result - Clifford hierarchy} would imply that $K^{\otimes n}_Q$ is noncontextual, yet $K^{\otimes n}_Q$ is contextual. To see this, we remark that the quantum solution corresponding to the LCS given by Mermin-Peres square already arises for $H(\zz_2)^{\otimes 2}$.\footnote{In fact, it is sufficient to note that $Y \otimes Y = - ZX \otimes ZX$ in Fig. \ref{fig: MP square} (a).} Consequently, $H(\zz_2)^{\otimes n}$ is contextual for all $n > 1$. Finally, using the isomorphism $K_{(4)} \cong H(\zz_2)$ (cf. Rmk. \ref{remk k4}) it also follows that $K_{(2^m)}^{\otimes n}$ is contextual for all $n>1,m>2$. As a final remark note that  Corollary \ref{cor:poset-2} implies that $K_Q(2)$ is noncontextual. 
\end{remk}

\end{document}